\newcommand{\nop}[1]{}
\newtheorem{example}{Example}
\newtheorem{theorem}{Theorem}
\newtheorem{corollary}{Corollary}
\newtheorem{lemma}{Lemma}
\newtheorem{remark}{Remark}
\begin{document}

\title{Towards a Calculus for Wireless Networks}
%\author{Paper \#65, \pageref{LastPage} pages}
%\author{Fengyou Sun, Yuming Jiang}

\numberofauthors{2}
\author{
\alignauthor
	Fengyou Sun\\
	%\affaddr{IIK, NTNU}\\
	\affaddr{IIK, NTNU,\\ Department of Information Security and Communication Technology,\\ Norwegian University of Science and Technology}\\
	\email{sunfengyou@gmail.com}\\	
\alignauthor
	Yuming Jiang\\
	%\affaddr{IIK, NTNU}\\
	\affaddr{IIK, NTNU,\\ Department of Information Security and Communication Technology,\\ Norwegian University of Science and Technology}\\
	\email{ymjiang@ieee.org}\\	
\and  % use '\and' if you need 'another row' of author names
}

%\author{Fengyou~Sun and %~\IEEEmembership{Student Member},
%        Yuming~Jiang%~\IEEEmembership{Senior Member}
%}

\maketitle
%\IEEEpeerreviewmaketitle

\begin{abstract}

%Shorter version (200 words):
This paper presents a set of new results directly exploring the special characteristics of the wireless channel capacity process. An appealing finding is that, for typical fading channels, their instantaneous capacity and cumulative capacity are both light-tailed. A direct implication of this finding is that the cumulative capacity and subsequently the delay and backlog performance can be upper-bounded by some exponential distributions, which is often assumed but not justified in the wireless network performance analysis literature. In addition, various bounds are derived for distributions of the cumulative capacity and the delay-constrained capacity, considering three representative dependence structures in the capacity process, namely comonotonicity, independence, and Markovian. To help gain insights in the performance of a wireless channel whose capacity process may be too complex or detailed information is lacking, stochastic orders are introduced to the capacity process, based on which, results to compare the delay and delay-constrained capacity performance are obtained. Moreover, the impact of self-interference in communication, which is an open problem in stochastic network calculus (SNC), is investigated and original results are derived. The obtained results in this paper complement the SNC literature, easing its application to wireless networks and its extension towards a calculus for wireless networks. 

\nop{
%Long version:
Future wireless communication calls for exploration of more efficient use of wireless channel capacity to meet the increasing demand on higher data rate and less latency. This motivates to maximally take into consideration the special characteristics of the wireless channel capacity process in analysis, which include the tail behavior, distribution bounds, stochastic ordering, and intra-interference in communication of the capacity. To this aim, this paper presents a set of new results directly exploring these characteristics. An appealing finding is that, for typical fading channels, their instantaneous capacity and cumulative capacity are both light-tailed. A direct implication of this finding is that  that the cumulative capacity and subsequently the delay and backlog performance of the channel can be upper-bounded by some exponential distribution, which is often assumed but not justified in the wireless network performance analysis literature in order to obtain tractable results. In addition, various bounds are derived for distributions of the cumulative capacity and the delay-constrained capacity, considering three representative dependence structures in the capacity process, namely comonotonic, i.i.d. and Markovian. To help gain insights in the performance of a wireless channel whose capacity process may be too complex to analyze or its detailed information is lacking, stochastic orders are introduced to the cumulative capacity process, based on which, results to compare the delay and delay-constrained capacity performance are obtained. Moreover, the impact of self-interference in communication, which is an open problem in stochastic network calculus (SNC), is investigated via a novel analytical approach, based on which original results are derived. The obtained results in this work complement the SNC literature, easing its application to wireless networks and extension towards a calculus for wireless networks. 
}

%The exploding data and increasing integration in wireless communication put forth the exploration of the ultimate capacity that the wireless channel can provide while ensuring more and more stringent demand on quality of service.
%This paper aims to derive results towards a calculus for wireless networks by exploiting the essence of stochastic network calculus. 
%Specifically, statistical properties of the cumulative capacity have been extensively investigated to meet the foundation of stochastic network calculus, based on which a large set of results are obtained for delay and delay constrained capacity, and stochastic orders are defined for cumulative capacity to compare its impact on the performance measure.
%In addition, the interference due to the broadcasting nature of wireless channel has been modeled as feedback, original results have been derived for this feedback system, which has been an open problem in stochastic network calculus and is the key to applying stochastic network calculus from feedforward networks to non-feedforward networks. 
\end{abstract}

%\keywords{Wireless Networks; Stochastic Network Calculus}

\keywords{Wireless Networks; Stochastic Network Calculus}

\section{Introduction}
\label{sec-intr}

In wireless communication, there will be a continuing wireless data explosion and an increasing demand on higher data rate and less latency. It has been depicted that the amount of IP data handled by wireless networks will exceed $500$ exabytes by 2020, the aggregate data rate and edge rate will increase respectively by $1000\times$ and $100\times$ from 4G to 5G, and the round-trip latency needs to be less than $1$ms in 5G \cite{andrews2014what}.
Evidently, it becomes more and more crucial to explore the ultimate capacity that a wireless channel can provide under stringent delay constraints and to analyze what delay limit may be achieved in specific wireless channel situations. %delay can be guaranteed

Information theory provides a framework for studying the performance limits in communication and the most basic measure of performance is channel capacity, i.e., the maximum rate of communication for which arbitrarily small error probability can be achieved \cite{tse2005fundamentals}. To date, wireless channel capacity has mostly been analyzed for its average rate in the asymptotic regime, i.e., ergodic capacity, or at one time instant/short time slot, i.e., instantaneous capacity. For instance, the first and second order statistical properties of instantaneous capacity have been extensively investigated, e.g., in \cite{rafiq2011statistical,renzo2010channel}.  However, such properties of wireless channel capacity are ordinarily not sufficient for use in assessing if data transmission over the channel meets its quality of service (QoS) requirements. 

Additionally, the accumulation of capacity in finite time regime, namely cumulative capacity, should be investigated. The cumulative capacity through a period is essentially the amount of data transmission service that the wireless channel provides (if there is data for transmission) or is capable of providing (if there is no data for transmission) in this period. This concept is closely related to the cumulative service used in the stochastic network calculus (SNC) theory \cite{jiang2008stochastic}, which has been used for delay and backlog analysis in wireless networks, e.g., \cite{jiang2005analysis, fidler2006end, jiang2008stochastic, jiang2010note, mahmood2011delay, zheng2013stochastic, ciucu2013towards, ciucu2014sharp, poloczek2015service, lei2016stochastic,al2016network}.
However, most of the results focus on deriving delay and backlog by mapping the physical layer model to a higher layer abstraction, e.g., Gilbert-Elliott channel model \cite{jiang2005analysis,fidler2006wlc15} and finite state Markov channel \cite{zheng2013stochastic,lei2016stochastic}, and then making use of the SNC theory to perform the analysis, without digging deep into the fundamental properties themselves of the cumulative capacity process of the wireless channel. %that is the fundamental bases for performance analysis in stochastic network calculus.

The wireless channel is an open medium, susceptible to noise, interference, and other channel impediments that change randomly over time as a result of user movement and environment dynamics \cite{goldsmith2005wireless}, e.g., large scale fading due to path loss of signal as a function of distance and shadowing by obstacles between the transmitter and the receiver, and small scale fading due to constructive and destructive addition of multipath signal components. Such natures of wireless channel give special characteristics of the channel capacity process. The two (possibly most well-known)  and {\em fundamental}  ones are: the wireless channel capacity is generally a logarithm function of the fading effect, and  on the channel, if two transmissions happen at the same time, they will collide or interfere with each other. %such properties (P.I) (P.II)

%The aim of this paper is to explore some additional properties of the wireless channel capacity process, which, though also fundamental, have largely been overlooked in the literature. 

The objective of this paper is to investigate such characteristics and properties of the wireless channel capacity, which have little been focused in the literature, and study their implications or impacts on the delay performance of the wireless channel. They include {\em tail behavior, distribution bounds, stochastic ordering, and self-interference in communication}. 

%The wireless channel capacity is generally a logarithm function of the fading effect. In particular, the statistical properties of fading are reflected to the capacity through the logarithm transformation that brings in a lot of benefits for analysis. Specifically, the distribution of the capacity is light tailed even when the fading is heavy tailed, e.g., Weibull fading \cite{sagias2005gaussian}; when the instantaneous capacity is light tailed, the cumulative capacity has also a light tailed behavior, which is the fundamental condition that exponential bounds can be obtained in performance analysis, in addition, the dependence in cumulative capacity also has an impact on the performance results; moreover, due to the broadcasting nature, interference is an important issue in wireless channels, e.g., intra-interference due to self feedback and inter-interference due to other users.

%The objective of this paper is to investigate the special properties of wireless channel capacity, and to derive further results focusing on delay and delay constrained capacity based on the investigation.
The specific contributions of this work are several-fold as summarized in the following.
\begin{enumerate}
\item
The tail behavior of the wireless channel capacity is studied. Specifically, we prove that the distribution of the instantaneous capacity is light-tailed for a number of typical fading channels, e.g., Rayleigh, Rice, Nakagami-$m$, Weibull, and lognormal fading channels. In addition, we prove that, conditioned on the instantaneous capacity being light-tailed, the cumulative capacity is also light-tailed. (Section \ref{tail})
%Tail behavior of the wireless channel capacity has been studied, specifically, it's shown that the distribution of the instantaneous capacity is light tailed for most typical fading channels, e.g., Rayleigh, Rice, and Weibull fading channels, moreover, conditional on that the instantaneous capacity is light tailed, it's proved that the cumulative capacity is also light tailed.
\item
Distribution bounds are derived for the cumulative capacity process, under different dependence structures, namely comonotonic process, additive process, and Markov additive process. For each case, the distribution of the transient capacity is also investigated. (Section \ref{bounds}) %with independent increments
%Distribution bounds have been derived for cumulative capacity based on specific models, i.e., comonotonic process, additive process, and Markov additive process. Transient behavior of the wireless capacity in a finite time period has also been investigated.
\item
Bounds on the tail probabilities of delay under fluid arrival, which is the best delay performance that can be guaranteed for a specific channel,  are derived for each dependence structure. Based on these, bounds on the delay constrained capacity, which is the maximum date rate that the channel can support without violating the required delay constraint, are correspondingly obtained. (Section \ref{bounds})
%Tail probabilities of delay have also been obtained for fluid arrival, which is the ultimate performance that can be guaranteed for a specific channel, accordingly, delay constrained capacity has also been obtained, which is the required capacity under delay constraints.
\item
Stochastic ordering properties of the cumulative capacity process are exploited to compare the delay performance, based on which a comparison of the delay constrained capacity performance is also obtained. (Section \ref{comparisons})
%Dependence in capacity has been considered, it's shown that the cumulative capacity is light tailed under arbitrary dependence structures, and exponential bounds are still available for performance measures under week dependence. Stochastic orders have been defined on cumulative capacity, based on which the impacts of dependence structures and marginals of capacity on performance are elaborated.
\item
The impact of the transmission self-interference characteristic on the delay performance is investigated. Results for both a single hop case and a multi-hop case are derived. (Section \ref{interference}) %(P.II) 
%An approach has been provided to analyze the interference. Specifically, the interference is modeled as feedback into the wireless channel, performance results are obtained for this cyclic system, and results are extended to multiple nodes cases considering both intra- and inter- interferences. (Section \ref{interference})
\end{enumerate}

The remainder of this paper is structured as follows. 
In Sec. \ref{basic}, basic wireless channel capacity concepts, including instantaneous capacity, cumulative capacity, transient capacity and delay-constrained capacity are introduced. Also in Sec. \ref{basic}, the necessity of studying cumulative capacity is elaborated through an analysis on the delay and backlog performance. In Sec. \ref{tail}, the tail behavior of wireless channel capacity is focused. Distribution bounds on the cumulative capacity and delay are derived in Sec. \ref{bounds}, based on which the maximal delay-constrained capacity is also presented for each considered case. Comparison approaches for different channel ordering characteristics are provided in Sec. \ref{comparisons}. The impact of transmission interferences in wireless channel is analyzed in Sec. \ref{interference}. In Sec. \ref{related}, discussion on our findings and contributions and related works is provided. Finally, the paper is concluded in Sec. \ref{conclusion}.
%In Sec. \ref{basic}, basic concepts of instantaneous capacity, cumulative capacity, and stochastic network calculus are introduced. Distribution bounds on the cumulative capacity and delay are derived in Sec. \ref{bounds}. Comparison approaches for different channels models are provided in Sec. \ref{comparisons}. Interferences in wireless channels are analyzed in Sec. \ref{interference}. In Sec. \ref{related}, related works are discussed. Finally, the paper is concluded in Sec. \ref{conclusion}.

\section{Basic Concepts}\label{basic}

In this section, basic wireless channel capacity concepts, including instantaneous capacity, cumulative capacity, transient capacity and delay-constrained capacity are first introduced, followed by an elaboration on the necessity of studying cumulative capacity for QoS performance of a wireless channel.

\subsection{Instantaneous Capacity}

The {\em instantaneous capacity} of a (wireless) communication channel at time $t$, denoted as $C(t)$ throughout this paper, is defined as the maximum mutual information over input distribution at $t$:\cite{costa2010multiple}: 
\begin{equation}\label{eq-1}
C(t) = \max\limits_{P(x)}I(X;Y|h(t))
\end{equation}
where $h(t)$ is a stochastic process describing the fading behavior of the channel, $\mathcal{X}$ and $\mathcal{Y}$ are respectively the input and output alphabets of the channel. 

Consider a discrete-time flat fading channel with input $x(t)$, output $y(t)$, and stationary fading process $h(t)$, which has the following complex baseband representation
\begin{equation}
y(t) = h(t)x(t) + n(t),
\end{equation}
where $n(t)$ is an i.i.d. white Gaussian noise process $\mathcal{C}\mathcal{N}(0,N_0)$. Then,  conditional on a realization of $h(t)$, the mutual information 
can be expressed as \cite{telatar1999capacity} 
\begin{equation}\label{eq-2}
I(X;Y|h(t)) = \sum\limits_{x\in\mathcal{X},y\in\mathcal{Y}} P(x,y|h_t)\log_2\frac{P(x,y|h_t)}{P(x|h_t)P(y|h_t)}.
\end{equation}

%\nop
{
%\begin{example}
Particularly, for a single input single output channel, 
if the channel side information is only known at the receiver,
the instantaneous capacity can be found by solving the right hand side of (\ref{eq-1}) with (\ref{eq-2}) as \cite{tse2005fundamentals}
\begin{equation}\label{def-ic}
C(t) = W\log_{2}(1+\gamma|h(t)|^{2}),
\end{equation}
where $|h(t)|$ denotes the envelope of $h(t)$, $\gamma \equiv {P}/{N_{0}W}$ denotes the average received SNR per complex degree of freedom, $P$ is the average transmission power per complex symbol, $N_{0}/2$ is the power spectral density of AWGN, and $W$ is the channel bandwidth. 
For multiple input and multiple output channels, a generalized form of (\ref{def-ic}) is available \cite{telatar1999capacity,foschini1998limits}.
}

Averaging the instantaneous capacity over the probability space of channel gain gives the {\em ergodic capacity}  that formally is defined as  \cite{telatar1999capacity}:
%The instantaneous capacity is defined as the maximum conditional mutual information for a realization of channel gain, while the ergodic capacity is the probabilistic average of the instantaneous capacity averaged over the probability space of channel gain \cite{telatar1999capacity}, i.e.,
\begin{equation}
\overline{C} = E[C(t)]. %I(X;Y) = E[I(X;Y|h(t))].
\end{equation} 
The ergodic capacity of a channel is constant. It defines the maximum transmission rate of the channel with asymptotically small error probability for the code with sufficiently long length that the received codewords is affected by all fading states \cite{goldsmith2005wireless}. As implied by its definition, the ergodic capacity is a concept for infinite code length in infinite time regime.

\subsection{Cumulative Capacity}

%The ergodic capacity is a concept for infinite code length in infinite time regime, while for finite regime, cumulative capacity is defined as the sum of instantaneous capacity through a period $(s,t]$, i.e.,
To account for finite time regimes, the {cumulative capacity} over a time period $(s,t]$, denoted as $S(s,t)$, is defined as the sum of instantaneous capacity in this period:
\begin{equation}
S(s,t) \equiv \sum\limits_{i=s+1}\limits^{t}{C(i)}.
\end{equation} 
For $S(0,t)$, we also use $S(t) (\equiv S(0,t))$ to simplify the expression. 

The time average of the cumulative capacity through $(0,t]$ is defined as the {\em transient capacity} \cite{tse2005fundamentals}, i.e.,
\begin{equation}
\overline{C}(t) = \frac{S(t)}{t}. 
\end{equation}

Note that the transient capacity is random, which essentially defines the achievable capacity for a code with finite length that the received codewords only experience partial fading states. 
%Taking the time limit on the transient capacity gives its expectation, i.e., 
The probabilistic average of the transient capacity is expressed as
\begin{equation}
E\left[\overline{C}(t)\right] = \overline{C}
\end{equation}
where $\overline{C}$ is the ergodic capacity of the channel.  
According to the law of large numbers, the transient capacity converges to the ergodic capacity when time goes to infinity, i.e.,
\begin{equation}
\lim\limits_{t\rightarrow{\infty}}\overline{C}(t) \rightarrow \overline{C},
\end{equation}
for independent and identically distributed instantaneous capacity. 
However, the dependence in capacity may be unknown, a more general result for the transient capacity in finite time horizon is expressed by the Chebyshev inequality \cite{papoulis2002probability}
\begin{equation}
P\{|\overline{C}(t) - \overline{C} |\ge{x}\}\le{\frac{\mathrm{Var}[\overline{C}(t) ]}{x^2}},
\end{equation}
which is a basic result of concentration \cite{boucheron2013concentration}.
It indicates that more statistical properties of the cumulative process should be taken into account besides the instantaneous capacity in view of temporal behavior.

\subsection{Delay-Constrained Capacity}

In the above-introduced wireless channel capacity concepts, the delay performance of the channel in transmitting data is not touched. As highlighted in the previous section, the delay performance is also crucial. 

In words, the {\em delay-constrained capacity} is defined as the maximum data traffic rate that the channel can support without violating a desired delay constraint for transmitting the data traffic over the channel \cite{gao2012}. 

More formally, the delay-constrained capacity or throughput is defined as the maximum traffic rate that the system can support without dropping, for which the delay constraint is met \cite{gao2012}:
\begin{equation}
\overline{C}_{(d,\epsilon)} = \sup \{ \lambda: P(D(t) > d) \le \epsilon, \forall t \}
\end{equation}
where $\lambda$ denotes the average traffic rate, the delay constraint is represented by a desired delay $d$ and the allowed violation probability $\epsilon$ of this delay, and the delay $D(t)$ at time $t$ is defined as \cite{ciucu2014sharp}:
\begin{equation}
%D(t) = \inf\{ d\ge{0}: A(t)\le A^\ast(t+d) \},
D(t) = \inf\{ d\ge{0}: A(t-d)\le A^\ast(t) \},
\end{equation}
where $A(t)$ denotes the total amount of traffic in $(0,t]$.  %that has arrived

%\subsection{Analysis Qualification}
\subsection{Necessity of Studying Cumulative Capacity} 

In order to find the delay-constrained capacity of the wireless channel, we first need to analyze $P(D(t) > d)$ under input $A(t)$. Let $a(t)$ denote the traffic input to the channel at time $t$. % or in $(t-1,t]$. 

%With a transform of point of view, the communication can be treated as a queueing process, i.e., 
The wireless channel is essentially a queueing system with cumulative service process $S(t)$ and cumulative arrival process $ A(0,t)=\sum\limits_{s=0}^{t}a(s)$, %A(t)\equiv
%\begin{equation}
%A(t)=\sum\limits_{s=0}^{t}a(s),
%\end{equation} 
%%where $a(t) = Q(x(t))$ and $Q(\cdot)$ is defined as the quantity function, i.e., the quantity of the codewords in bits, 
and the temporal increment in the system is expressed as
\begin{equation}
X(t) = a(t)-C(t).
\end{equation}

The backlog in the system is a reflected process of the temporal increment $X(t)$, as \cite{asmussen2003applied} 
\begin{equation}
B(t+1) = \left[ B(t) + X(t)\right]^{+}.
\end{equation}
%under the stationary assumption of $X(t)$ and 
Throughout this paper, $B(0)=0$ is supposed, with which the backlog function can be further expressed as
\begin{equation}\label{eq-bl}
B(t) = \sup_{0\le{s}\le{t}}({A}(s,t)-{S}(s,t)). 
%\stackrel{d}{=} \sup_{0\le{s}\le{t}}({A}(s)-{S}(s)).
\end{equation}

For a system without loss, the output, denoted as $A^{\ast}(t)$, is the difference between the input and backlog, i.e., %conservative
\begin{equation}\label{eq-ior}
A^{\ast}(t) = A(t) - B(t) = A\otimes S(t),
\end{equation}
where $f\otimes g(t) = \inf_{0\le{s}\le{t}}\{f(s)+g(s,t)\}$ is the min-plus convolution \cite{baccelli1992synchronization}.

In order to avoid nontrivial considerations, we focus in this paper on the {\em maximal} delay-constrained capacity under all types of inputs that have the same average traffic rate $\lambda$. It can be shown, based on queue comparison results \cite{muller2002comparison}, that such {\em maximal} delay-constrained capacity is achieved when $A(t)=\lambda{t}$, i.e., the input is a constant fluid input. 

In this case, it is trivial from (\ref{eq-bl}) that for backlog, there holds
\begin{equation}\label{dis-bl}
P(B(t) > x) = P\left\{ \sup_{0\le{s}\le{t}}\{\lambda(t-s)-{S}(s,t)\} >x \right\}.
\end{equation}
For delay, with its definition and (\ref{eq-bl}), the following can also be shown
\begin{equation}
P(D(t) > {d}) = P\left\{ \sup_{0\le{s}\le{t}}\{ \lambda(t-s) - S(s,t) \}\ge\lambda{d} \right\}. \label{dis-dd}
\end{equation}
With (\ref{dis-bl}) and (\ref{dis-dd}), the following relationship can be easily verified:
\begin{equation}
P(B(t) > {x}) = P\left(D(t) > {\frac{x}{\lambda}}\right). 
\end{equation}

Eq. (\ref{dis-dd}) and Eq. (\ref{dis-bl}) immediately imply that for delay and backlog performance analysis of the wireless channel, knowing its ergodic capacity only is not enough, $C(t)$ is not sufficient either as it ignores the potential dependence behavior between $C(s+1), \cdots, C(t)$ in $S(s,t)$, and it is necessary to study the stochastic behavior of the cumulative capacity process $S(t)$.

\section{Light Tail Behavior}\label{tail}

A distribution is said to be light-tailed if the tail $\overline{F}(x)=1-F(x)$ is exponentially bounded, i.e., 
\begin{equation}
\overline{F}(x)=O(e^{-\theta{x}}), 
\end{equation}
for some $\theta>0$, or equivalently, the moment generating function $\widehat{F}[\theta]$ is finite for some $\theta>0$. Otherwise, the distribution is said to be heavy-tailed \cite{asmussen2010ruin,rolski1998stochastic}.

\nop{
Particularly, for a single input single output channel, 
if the channel side information is only known at the receiver,
the instantaneous capacity can be found by solving the right hand side of (\ref{eq-1}) with (\ref{eq-2}) as \cite{tse2005fundamentals}
\begin{equation}\label{def-ic}
C(t) = W\log_{2}(1+\gamma|h(t)|^{2}),
\end{equation}
where $|h(t)|$ denotes the envelope of $h(t)$, $\gamma \equiv {P}/{N_{0}W}$ denotes the average received SNR per complex degree of freedom, $P$ is the average transmission power per complex symbol, $N_{0}/2$ is the power spectral density of AWGN, and $W$ is the channel bandwidth. 
For multiple input and multiple output channels, a generalized form of (\ref{def-ic}) is available \cite{telatar1999capacity,foschini1998limits}.
}

\begin{theorem}%{lemma}
\label{th-lt}
For flat fading, where the instantaneous capacity is expressed in the logarithm transform of the instantaneous channel gain, i.e., $C(t) = W\log_{2}(1+\gamma{H(t)^2})$  for all $t$, if the distribution of the fading process is not heavier than fat tail, then the distribution of the instantaneous capacity is light-tailed.
%For flat fading, the distribution of the instantaneous capacity is light tailed if the distribution of the fading process is not heavier than the fat tail.
\end{theorem}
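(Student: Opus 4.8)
The plan is to use the moment-generating-function characterization of light tails just recalled: it suffices to exhibit a single $\theta>0$ for which $E[e^{\theta C(t)}]<\infty$. Plugging in the logarithmic capacity law and setting $\beta\equiv\theta W/\ln 2$,
\begin{equation}
E\left[e^{\theta C(t)}\right]=E\left[e^{\theta W\log_2(1+\gamma H(t)^2)}\right]=E\left[(1+\gamma H(t)^2)^{\beta}\right],
\end{equation}
so the whole question collapses to whether the random variable $1+\gamma H(t)^2$ has a finite moment of some strictly positive order.

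First I would reduce this to a moment of the envelope itself. Since $1+\gamma H^2\le 1+\gamma$ when $H\le 1$ and $1+\gamma H^2\le(1+\gamma)H^2$ when $H\ge 1$, one gets the crude bound $(1+\gamma H^2)^{\beta}\le(1+\gamma)^{\beta}\bigl(1+H^{2\beta}\bigr)$ for every $\beta>0$, hence
\begin{equation}
E\left[(1+\gamma H^2)^{\beta}\right]\le(1+\gamma)^{\beta}\left(1+E\left[H^{2\beta}\right]\right),
\end{equation}
and it is enough to show $E[H^{2\beta}]<\infty$ for some $\beta>0$, i.e.\ that the envelope possesses at least one finite positive moment.

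Second I would cash in the tail hypothesis. ``Not heavier than fat tail'' should be read as saying the envelope tail is eventually dominated by a genuine power law: there exist $\alpha>0$, $x_0$ and $c<\infty$ with $\overline{F_H}(x)\le c\,x^{-\alpha}$ for $x\ge x_0$ (the only distributions this excludes — those heavier than every power law — are exactly the ones having no finite positive moment). Using the layer‑cake identity $E[H^p]=\int_0^\infty p\,x^{p-1}\overline{F_H}(x)\,dx$ and splitting the integral at $x_0$, the upper piece is at most $\int_{x_0}^\infty p\,c\,x^{p-1-\alpha}\,dx<\infty$ whenever $0<p<\alpha$, so $E[H^p]<\infty$ for all such $p$. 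Choosing $\theta$ with $0<\theta<\alpha\ln 2/(2W)$ then forces $2\beta<\alpha$, which yields $E[H^{2\beta}]<\infty$, hence $E[e^{\theta C(t)}]<\infty$, and $C(t)$ is light-tailed.

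The one delicate point is the step that converts the qualitative hypothesis ``not heavier than fat tail'' into the quantitative statement that $\overline{F_H}$ is dominated by some power law $c\,x^{-\alpha}$ (equivalently, that $H$ has one finite positive moment); once that correspondence is pinned down from the definition of fat tail, the remainder is just the elementary estimate above, uniform in all the channel parameters. It is worth observing that for the common fading models (Rayleigh, Rice, Nakagami-$m$, Weibull, lognormal) the envelope actually has \emph{every} positive moment finite, so the conclusion holds there with considerable room to spare.
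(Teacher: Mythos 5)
Your proof is correct, and it rests on the same core observation as the paper's: the logarithmic capacity law turns a power-law (fat) tail of the envelope into an exponential tail of the capacity. The packaging, however, is genuinely different. The paper works directly with the tail function, using the identity $\overline{F}_C(x)=\overline{F}_H(r)$ with $r=\sqrt{(2^{x/W}-1)/\gamma}$ and observing that an exponential bound in $x$ is a polynomial bound in $r$; as written, its displayed implication actually runs from ``$C$ light-tailed'' to ``$\overline{F}_H(r)=O(r^{-\theta})$'', i.e.\ the converse of what the theorem asserts, and the reader must supply the (valid, since the change of variables is a monotone bijection) reversal. You instead invoke the moment-generating-function characterization that the paper states at the top of the section, reduce $E[e^{\theta C}]$ to $E[(1+\gamma H^2)^{\theta W/\ln 2}]$ and hence to a single finite positive moment of the envelope, and extract that moment from the power-law tail bound via the layer-cake formula. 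Your route is longer but buys two things: the logical direction of the argument is unambiguous, and you are forced to pin down the hypothesis quantitatively ($\overline{F}_H(x)\le c\,x^{-\alpha}$ eventually, which you correctly identify as equivalent to the envelope having at least one finite positive moment), something the paper leaves implicit; it also yields an explicit admissible range $0<\theta<\alpha\ln 2/(2W)$ for the exponential decay rate. The elementary estimate $(1+\gamma H^2)^{\beta}\le(1+\gamma)^{\beta}\bigl(1+H^{2\beta}\bigr)$ and the moment computation are both sound, so no gap remains.
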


\begin{proof}
Due to stationary assumption, $H(t)$ does not change its statistical properties over time, or $H(s) \stackrel{d}{=} H(t)$ for all $s$ and $t$. Then for any $t$, we can remove the time index $t$ and write
%Remember the instantaneous capacity is the logrithm transform of the instantaneous channel gain, i.e.,
\begin{equation}
C = W\log_{2}(1+\gamma{H^2}).
\end{equation}
Correspondingly, the tail of the instantaneous capacity is a function of the tail of the channel gain, i.e.,
\begin{equation}
\overline{F}_C(x) = \overline{F}_H\left(\sqrt{\frac{2^{\frac{x}{W}}-1}{\gamma}}\right).
\end{equation}

Let $r = \sqrt{\frac{2^{\frac{x}{W}}-1}{\gamma}}$. The tail behavior of the instantaneous capacity can be expressed in terms of that of the fading process. Specifically, for some $\theta>0$, $\overline{F}_C(x)=O(e^{-\theta{x}})$ entails
\begin{eqnarray}
\overline{F}_H(r) 
%= O\left( e^{-\theta\frac{W}{\log{2}}\log(\gamma{r^2}+1)} \right) 
= O\left( r^{-\theta} \right),
\end{eqnarray}
which completes the proof.
\end{proof}

\begin{corollary}
If a wireless channel is Rayleigh, Rice, Nakagami-$m$, Weibull, or lognormal fading channel, its instantaneous capacity is light-tailed. 
\end{corollary}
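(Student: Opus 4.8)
The plan is to reduce everything to Theorem~\ref{th-lt}. For each of the five listed channels the instantaneous capacity has the required form $C = W\log_{2}(1+\gamma H^{2})$ with $H$ the fading envelope, so by Theorem~\ref{th-lt} it suffices to check, channel by channel, that the envelope distribution is ``not heavier than fat tail'', i.e.\ that $\overline{F}_H(r) = O(r^{-\theta})$ for some $\theta>0$. Equivalently (via the substitution $r=\sqrt{(2^{x/W}-1)/\gamma}$ already used in the proof of Theorem~\ref{th-lt}), it is enough that the envelope tail decay at least polynomially fast. I would split the five cases into two groups and handle them separately.

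For the Rayleigh, Rice, Nakagami-$m$, and Weibull channels I would simply quote the known closed forms of the complementary envelope CDF (or, equivalently, of the received-power CDF): the Rayleigh envelope has $\overline{F}_H(r)=e^{-r^{2}/(2\sigma^{2})}$; the Nakagami-$m$ envelope has $\overline{F}_H(r)=\Gamma(m,\,m r^{2}/\Omega)/\Gamma(m)$; the Weibull envelope has $\overline{F}_H(r)=e^{-(r/\lambda)^{k}}$ with $k>0$; and the Rician envelope tail is given by a Marcum $Q$-function. In each of these cases the tail is dominated by $p(r)\,e^{-c\,r^{\alpha}}$ for some constants $c>0$, $\alpha>0$ and a polynomial $p$, and such a function is $o(r^{-\theta})$ for \emph{every} $\theta>0$; a fortiori $\overline{F}_H(r)=O(r^{-\theta})$, so Theorem~\ref{th-lt} applies. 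These four channels are in fact far lighter than merely ``not fat-tailed''.

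The delicate case is the lognormal channel, because for it the fading envelope $H$ is itself heavy-tailed, so the claim is not self-evident. Writing $\ln H \sim \mathcal{N}(\mu,\sigma^{2})$ gives $\overline{F}_H(r) = \overline{\Phi}\bigl((\ln r-\mu)/\sigma\bigr)$, and the standard Gaussian-tail bound $\overline{\Phi}(y)\le e^{-y^{2}/2}$ for $y\ge 0$ shows that $\overline{F}_H(r)$ decays essentially like $e^{-(\ln r)^{2}/(2\sigma^{2})}$ up to lower-order factors. The key point is that $(\ln r)^{2}/(2\sigma^{2})$ eventually exceeds $\theta\ln r$ for any fixed $\theta>0$, hence $e^{-(\ln r)^{2}/(2\sigma^{2})}=O(r^{-\theta})$ for every $\theta>0$; the logarithm in $C=W\log_{2}(1+\gamma H^{2})$ thus ``tames'' the lognormal tail, and Theorem~\ref{th-lt} again yields the conclusion. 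The main obstacle is precisely making this last step rigorous --- one must avoid conflating ``$H$ heavy-tailed'' with ``$C$ heavy-tailed'' and must state explicitly that quadratic-in-$\ln r$ decay beats every power $r^{-\theta}$; the remaining four channels need only routine bookkeeping with well-known special-function tail estimates.
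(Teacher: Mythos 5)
Your proposal is correct and follows essentially the same route as the paper: reduce each channel to the hypothesis of Theorem~\ref{th-lt} by checking that the envelope tail is $O(r^{-\theta})$, using the exponential-type envelope tails for Rayleigh/Weibull/Nakagami-$m$/Rice and a separate argument that the lognormal tail, though heavy, still decays faster than every power. Your explicit $(\ln r)^{2}$-versus-$\theta\ln r$ computation for the lognormal case is in fact more rigorous than the paper's one-line appeal to ``all moments exist,'' but it establishes the same fact by the same underlying idea.
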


\begin{proof}
For Weibull fading channel, the tail of fading is expressed as
\begin{equation}
\overline{F}_{H}(r) = e^{-cr^{k}},
\end{equation}
where $c(>0)$ and $k(>0)$ are some constants. Applying Taylor's theorem to expend $e^{cr^{k}}$, it is easily shown that, for some $\theta$ satisfying $k > \theta>0$ 
\begin{equation}\label{eq-te}
%\lim_{r\rightarrow\infty}\frac{e^{-cr^k}}{r^{-\theta}} = \lim_{r\rightarrow\infty}\frac{r^\theta}{1+cr^k+\ldots+o(r^\theta)} = 0,
\lim_{r\rightarrow\infty}\frac{e^{-cr^k}}{r^{-\theta}} = \lim_{r\rightarrow\infty}\frac{r^\theta}{1+cr^k+\ldots} = 0.
\end{equation}

The limit (\ref{eq-te}) shows that though the Weibull distribution is heavy-tailed for $0<k<1$, it is lighter than the fat tail. Hence from Theorem \ref{th-lt}, the instantaneous capacity under Weibull fading is light-tailed. 

Rayleigh fading is a special case of Weibull fading with $k=2$. The distribution of its instantaneous capacity is expressed as \cite{hogstad2007exact}
\begin{equation}
F(x) = 1- e^{\frac{1-2^{\frac{x}{W}}}{\gamma}}.
\end{equation}
It is trivial to show that the tail is exponentially bounded 
\begin{equation}
\overline{F}(x) \le e^{\frac{1}{\gamma}}e^{-\theta{x}},
\end{equation}
for $0<\theta\le\frac{1}{W\gamma}2^{\frac{1}{\log{2}}}\log{2}$. Hence, the instantaneous capacity under Rayleigh fading is also light-tailed.

For Rice fading channel, the tail of the instantaneous capacity is expressed as \cite{rafiq2009statisticalcomb}
\begin{equation}
\overline{F}(x) = Q_{1}\left( \frac{s}{\sigma_0}, \frac{\sqrt{2^{x/W}-1/{\gamma_{s}{s}}}}{{\sigma_{0}}^2} \right),
\end{equation}
where $W$ is the bandwidth, $s$ the amplitude of the LOS (light of sight) component, ${\sigma_{0}}^2$ the variance of the underlying Gaussian process, and $\gamma_s$ the average SNR. According to the exponential bound of the Marcum Q-function \cite{simon2000exponential}, 
\begin{IEEEeqnarray}{rCl}
\alpha_{F} &=& {\lim\sup}_{x\rightarrow\infty}\frac{-\log\overline{F}(x)}{x} \\
&\ge& {\lim\sup}_{x\rightarrow\infty}\frac{1}{2x}{\left( \frac{\sqrt{2^{x/W}-1/{\gamma_{s}{s}}}}{{\sigma_{0}}^2} - \frac{s}{\sigma_0} \right)^2} \\
&=& \infty, \IEEEeqnarraynumspace
\end{IEEEeqnarray}
which means that the instantaneous capacity of a Rice fading channel is light-tailed \cite{rolski1998stochastic}. 

For Nakagami-$m$ fading channel \cite{rafiq2010influence}, since the square of the Nakagami-$m$ random variable follows a gamma distribution, which is light-tailed \cite{asmussen2010ruin}, the distribution of its instantaneous capacity is thus also light-tailed.

For lognormal fading channel \cite{rafiq2008influence}, since the lognormal distribution have all the moments, which means that it has a lighter tail than the fat-tailed distribution \cite{halliwell2013classifying}, thus the distribution of its instantaneous capacity is light-tailed.
\end{proof}

%In the literature, distribution functions of the instantaneous capacity are available for various types of channels, e.g. Rayleigh channel \cite{hogstad2007exact}, Rice channel \cite{rafiq2009statisticalcomb}, Nakagami-$m$ channel \cite{rafiq2008statisticalmimo}, Suzuki channel \cite{rafiq2007impact}, and more \cite{rafiq2011statistical}.

%\begin{lemma}
%Since the logarithm function is strictly increasing, the dependence structure of the fading process is exactly equivalent to that of the capacity process, i.e., there is dependence in capacity because of the dependence in fading.
%\end{lemma}

\begin{corollary}
For frequency-selective fading modeled by parallel $L$ independent channels, whose instantaneous capacity is expressed as $C = \sum_{\ell=1}^{L} W_{\ell}\log_{2}(1+\gamma{H_{\ell}^2})$, if the distribution of the instantaneous capacity of each sub-channel, which is $C_\ell = W_{\ell}\log_{2}(1+\gamma{H_{\ell}^2})$,  is light-tailed, so is the instantaneous capacity of the frequency-selective fading channel.
\end{corollary}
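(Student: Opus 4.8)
The plan is to pass to moment generating functions and exploit independence, using the equivalent characterization of light tails recalled at the start of Section~\ref{tail}: $C_\ell$ is light-tailed iff $\widehat{F}_{C_\ell}[\theta_\ell] = E[e^{\theta_\ell C_\ell}] < \infty$ for some $\theta_\ell > 0$. First I would note that each $C_\ell = W_\ell\log_2(1+\gamma H_\ell^2) \ge 0$, so that $E[e^{\theta C_\ell}] < \infty$ in fact holds for every $0 < \theta \le \theta_\ell$. Then I would set $\theta^\ast = \min_{1\le \ell \le L}\theta_\ell$, which is strictly positive precisely because the minimum is over the finite index set $\{1,\dots,L\}$; consequently $E[e^{\theta^\ast C_\ell}] < \infty$ for every $\ell$.

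Next, since the $L$ sub-channels are independent, the random variables $C_1,\dots,C_L$ are independent, so the MGF of $C=\sum_{\ell=1}^{L}C_\ell$ factorizes:
\begin{equation}
E\left[e^{\theta^\ast C}\right] = E\left[\prod_{\ell=1}^{L} e^{\theta^\ast C_\ell}\right] = \prod_{\ell=1}^{L} E\left[e^{\theta^\ast C_\ell}\right].
\end{equation}
Each factor on the right is finite by the previous step, and a product of finitely many finite numbers is finite, hence $\widehat{F}_C[\theta^\ast] < \infty$; by the stated characterization this is exactly what it means for $C$ to be light-tailed. If an explicit tail bound is preferred, a Chernoff argument gives $\overline{F}_C(x) \le e^{-\theta^\ast x} E[e^{\theta^\ast C}] = O(e^{-\theta^\ast x})$.

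There is no real obstacle here: the statement is the classical fact that the class of light-tailed distributions is closed under convolution of finitely many independent summands. The only point deserving a word of care is that the abscissae of convergence $\theta_\ell$ may differ across sub-channels, which is why one takes their minimum; this is also where finiteness of $L$ is genuinely used, as the argument would break for an infinite parallel bank. Should one wish to avoid MGFs entirely, an alternative is the union bound $\{C > x\} \subseteq \bigcup_{\ell=1}^{L}\{C_\ell > x/L\}$, giving $\overline{F}_C(x) \le \sum_{\ell=1}^{L}\overline{F}_{C_\ell}(x/L)$, and then summing the finitely many exponential bounds $\overline{F}_{C_\ell}(x/L) = O(e^{-\theta_\ell x/L})$ again yields an exponential bound on $\overline{F}_C$.
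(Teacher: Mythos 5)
Your proof is correct, but it takes a genuinely different route from the paper's. You work with moment generating functions: you use nonnegativity of each $C_\ell$ to get a common exponent $\theta^\ast=\min_\ell\theta_\ell>0$, factorize $E[e^{\theta^\ast C}]$ by independence, and conclude via Chernoff. The paper instead stays inside its SNC framework: it writes $\overline{F}_{C}(x)=1-F_{C_1}\circledast\cdots\circledast F_{C_L}(x)\le \overline{F}_{C_1}\otimes\cdots\otimes\overline{F}_{C_L}(x)$, i.e., it bounds the tail of the independent sum by the min-plus convolution of the marginal tails (a bound that uses no dependence information at all), and then defers to the explicit exponential bound on such a min-plus convolution established in the proof of the following theorem. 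Your MGF argument is more elementary and self-contained, and it yields the sharper decay rate $\theta^\ast$; the paper's argument buys robustness to dependence (the min-plus bound holds even if the sub-channels were not independent) and uniformity with the machinery used for the cumulative capacity in the next theorem. Your union-bound variant $\overline{F}_C(x)\le\sum_{\ell}\overline{F}_{C_\ell}(x/L)$ shares that robustness, at the cost of degrading the exponent by a factor of $1/L$. All three arguments are valid; the only substantive hypothesis any of them needs beyond light tails of the marginals is finiteness of $L$, which you correctly flag.
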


\begin{proof}
%With parallel channels \cite{tse2005fundamentals}, the instantaneous capacity of the frequency-selective fading is expressed as
%\begin{equation}
%C = \sum_{\ell=1}^{L} W_{\ell}\log_{2}(1+\gamma{H_{\ell}^2}),
%\end{equation}
%where $L$ is the number of sub-channels. 
For this frequency-selective fading channel, its instantaneous capacity is by definition related to the instantaneous capacity of each sub-channel as 
\begin{equation}
C = \sum_{\ell=1}^{L} C_\ell.
\end{equation}
The tail of the distribution of the instantaneous capacity can then be expressed by \cite{jiang2008stochastic}
\begin{eqnarray}
\overline{F}_{C}(x) &=& 1 - {F}_{C_1}\circledast\ldots\circledast{F}_{C_L}(x) \nonumber \\
&\le& \overline{F}_{C_1}\otimes\ldots\otimes\overline{F}_{C_L}(x),
\end{eqnarray}
where $f\circledast{g}(x)=\int_{-\infty}^{\infty} f(x-y)dg(y)$ is the Stieltjes convolution and $f\otimes g(t) = \inf_{0\le{s}\le{t}}\{f(s)+g(s,t)\}$ is the min-plus convolution \cite{baccelli1992synchronization}. The first step is a property of sum of independent random variables, and the second step is from that the independent case is upper bounded by the distribution bound for such a sum where no dependence information is utilized  \cite{jiang2008stochastic}. The latter, as to be illustrated in the proof of the next theorem, is light-tailed. 
%which is bounded by the case where dependence is considered that is illustrated in the next theorem.
\end{proof}

\begin{theorem}
Consider a wireless channel. If the distribution of its instantaneous capacity at any time is light-tailed, the distribution of the cumulative capacity is also light-tailed. In addition, the distribution of the cumulative capacity of a concatenation of such wireless channels is light-tailed as well.
\end{theorem}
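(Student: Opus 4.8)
The plan is to prove the two claims separately, each by an elementary moment-generating-function argument, exploiting the fact that light-tailedness is equivalent to finiteness of the MGF in a neighborhood of the origin. Write $S(t) = \sum_{i=1}^{t} C(i)$. For the first claim I would fix $\theta > 0$ small enough that $\widehat{F}_{C(i)}[\theta] = E[e^{\theta C(i)}] < \infty$ for every $i$; such a $\theta$ exists because each $C(i)$ is assumed light-tailed (and by stationarity the same $\theta$ works for all $i$, though one can also just take a finite minimum since $t$ is finite). Then I would bound $E[e^{\theta S(t)}] = E\bigl[e^{\theta \sum_{i=1}^{t} C(i)}\bigr]$. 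Without any dependence assumption the clean tool is Hölder's inequality: for the conjugate exponents $p_i = t$ one gets $E\bigl[\prod_{i=1}^{t} e^{\theta C(i)}\bigr] \le \prod_{i=1}^{t} \bigl(E[e^{t\theta C(i)}]\bigr)^{1/t}$, which is finite provided $t\theta$ is still in the convergence region of each MGF — so one simply rescales $\theta$ by $1/t$ at the outset. Hence $\widehat{F}_{S(t)}[\theta'] < \infty$ for $\theta' = \theta/t > 0$, which is exactly the definition of light-tailedness for $S(t)$; equivalently one then reads off $\overline{F}_{S(t)}(x) \le E[e^{\theta' S(t)}] e^{-\theta' x} = O(e^{-\theta' x})$ via the Chernoff bound.

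For the concatenation claim I would recall from the earlier corollary's proof the network-calculus fact that the service of a tandem of servers is bounded below, in the min-plus sense, by the convolution $S_1 \otimes \cdots \otimes S_n$ of the individual service processes, and correspondingly the tail of the end-to-end cumulative capacity is upper-bounded (as in that corollary) by a min-plus convolution of the individual tails $\overline{F}_{S_1}\otimes\cdots\otimes\overline{F}_{S_n}$. Since each $S_j(t)$ was just shown to be light-tailed, and a min-plus convolution of finitely many exponentially decaying bounds is again exponentially decaying — concretely, $\inf_{0 \le s_1 \le \cdots}\{a_1 e^{-\theta_1 s_1} + \cdots\}$ type expressions are dominated by a single exponential in the total argument — the end-to-end cumulative capacity is light-tailed. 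Alternatively, and perhaps more transparently, I would note the cumulative capacity of the tandem over $(0,t]$ is at most the minimum over the $n$ hops of that hop's cumulative capacity, so its tail is $\le \min_j \overline{F}_{S_j}(x) = O(e^{-\theta_{\min} x})$ directly; I would present whichever version matches the paper's intended notion of "cumulative capacity of a concatenation."

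The main obstacle, and the point I would be most careful about, is the dependence structure inside $S(t)$: the statement asserts light-tailedness with no independence hypothesis, so the naive "MGF of a sum equals product of MGFs" step is unavailable, and one genuinely needs the Hölder step (or a union-bound / splitting-of-$x$ argument on $\overline{F}_{S(t)}(x) \le \sum_{i} P(C(i) > x/t)$, which also works since a finite sum of light-tailed tails is light-tailed). This is routine but must be stated, because it is precisely the subtlety the paper is flagging when it emphasizes that "conditioned on the instantaneous capacity being light-tailed" suffices regardless of dependence. A secondary, purely cosmetic point is ensuring the constant in the $O(\cdot)$ and the admissible range of $\theta$ are tracked consistently with the paper's convention $\overline{F}(x) = O(e^{-\theta x})$; this is immediate from the Chernoff/Markov inequality once the MGF is shown finite.
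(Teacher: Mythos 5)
Your proposal is correct, and for the single-channel claim it takes a genuinely different and more elementary route than the paper. The paper never touches the moment generating function of $S(t)$ directly: it invokes a stochastic-network-calculus lemma bounding the tail of a dependence-unconstrained sum by the min-plus convolution of the individual tails, $\overline{F}_{S(t)}(x)\le \overline{F}_{C(1)}\otimes\cdots\otimes\overline{F}_{C(t)}(x)$, and then cites the closed-form consequence $\overline{F}_{S(t)}(x)\le \prod_{k=1}^{t}(a_k b_k w)^{1/(b_k w)}e^{-x/w}$ with $w=\sum_k 1/b_k$. Your H\"older step $E[e^{\theta S(t)}]\le\prod_{i=1}^{t}\bigl(E[e^{t\theta C(i)}]\bigr)^{1/t}$ (or the cruder union bound over the events $\{C(i)>x/t\}$) reaches the same qualitative conclusion while being self-contained; it correctly handles the arbitrary dependence that you rightly identify as the only real subtlety, and it has the added benefit of certifying finiteness of the MGF of $S(t)$ itself, which is the property SNC actually needs downstream. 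The price is a weaker decay exponent ($\theta_0/t$ versus the paper's $1/w\ge \min_k b_k/t$), which is immaterial for establishing light-tailedness. For the concatenation claim the two arguments essentially coincide: the paper defines the end-to-end capacity as $S_1\otimes\cdots\otimes S_N$ and bounds its tail by $\inf_{\mathbf{u}}E\bigl[e^{\theta\sum_{i=1}^{N}S_i(\mu_{i-1},\mu_i)}\bigr]e^{-\theta x}$ over partitions, leaving the finiteness of that expectation implicit, whereas your ``minimum over hops'' observation is exactly the degenerate partition $\mu_1=\cdots=\mu_{j-1}=0$, $\mu_j=\cdots=\mu_{N-1}=t$ instantiating that infimum, which immediately inherits an exponential bound from the first part and so makes the final step fully explicit.
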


\begin{proof}
Without relying on any dependence constraint, it has been proved the tail of the cumulative capacity, i.e., $S(t) = C(1) + \cdots + C(t)$, is bounded by \cite{jiang2008stochastic}
\begin{equation}
\overline{F}_{S(t)}(x) \le \overline{F}_{C(1)}\otimes\ldots\otimes\overline{F}_{C(t)}(x).
\end{equation}
%where $f\otimes g(t) = \inf_{0\le{s}\le{t}}\{f(s)+g(s,t)\}$ is the min-plus convolution \cite{baccelli1992synchronization}, 
In case that the instantaneous capacity is light tailed, i.e., $\overline{F}_{C}(x)\le{a}e^{-bx}$, the tail of the cumulative capacity is also exponentially bounded, i.e.,
\begin{equation}
\overline{F}_{S(t)}(x) \le \prod_{k=1}^{t}(a_{k}b_{k}w)^{\frac{1}{b_{k}w}} \cdot e^{\frac{-x}{w}},
\end{equation}
where $w=\sum_{k=1}^{t}\frac{1}{b_k}$, by applying a distribution bound for the sum of exponentially bounded random variables  \cite{jiang2008stochastic}.

For a concatenation of wireless channels, each with a cumulative capacity $S_i(s,t)$, the cumulative capacity of the concatenation network can be expressed %the relay channel is expressed as
\begin{equation}\label{eq-cc}
S(s,t) = S_1\otimes\ldots\otimes S_N(s,t).
\end{equation}
This is because the cumulative capacity process is essentially the service process of the channel. Consequently, the service process SNC result for such a concatenation system directly applies, which has the same form as (\ref{eq-cc}) (see e.g., \cite{fidler2006end} \cite{jiang2008stochastic}).

Then, the tail is expressed as
\begin{IEEEeqnarray}{rCl}
\overline{F}_{S(t)}(x) &=& P\left\{ S_1\otimes\ldots\otimes S_N(t) \ge x \right\} \\
&=& P\left\{ \inf_{\mathbf{u}\in\mathcal{U}(x)}\sum_{i=1}^N {S_i}(\mu_{i-1},\mu_i) \ge x \right\} \IEEEeqnarraynumspace\\
&\le& \inf_{\mathbf{u}\in\mathcal{U}(x)} P\left\{ \sum_{i=1}^N {S_i}(\mu_{i-1},\mu_i) \ge x \right\} \\
&\le& \inf_{\mathbf{u}\in\mathcal{U}(x)} E\left[ e^{\theta\sum_{i=1}^N {S_i}(\mu_{i-1},\mu_i)} \right] \cdot e^{-\theta{x}}, \IEEEeqnarraynumspace
\end{IEEEeqnarray}
where $\mathcal{U}(x)=\{ \mathbf{u}=(u_1,\ldots,u_t): {0\le{\mu_1}\le{\ldots}\le{\mu_{N-1}}\le{t}} \}$, for some $\theta>0$. This ends the proof.
\end{proof}

\section{Distribution Bounds}\label{bounds}

In general, there is dependence in capacity over time, for which the dependence in fading is a direct cause.

For the influence of stochastic dependence in the cumulative capacity process, the general Fr\'{e}chet bounds \cite{ruschendorf2013mathematical} directly apply as:
%It's obvious that there is dependence in capacity over time, while the dependence in fading is just a factor of the general dependence in capacity.
%The influences of stochastic dependence on the cumulative capacity are reflected by the Fr\'{e}chet bounds \cite{ruschendorf2013mathematical}
\begin{equation}
\widecheck{F}_{S(t)}(x) \le F_{S(t)}(x) \le \widehat{F}_{S(t)}(x),
\end{equation}
where
\begin{eqnarray}
\widecheck{F}_{S(t)}(x) &=& \left[ {\sup_{\mathbf{u}\in\mathcal{U}(x)} \sum_{i=1}^{t}F_{C(i)}(u_i)} -(t-1)\right]^{+} \\
\widehat{F}_{S(t)}(x) &=& \left[ {\inf_{\mathbf{u}\in\mathcal{U}(x)} \sum_{i=1}^{t}F_{C(i)}(u_i) } \right]_{1}
\end{eqnarray}
with $\mathcal{U}(x)=\left\{ \mathbf{u}=(u_1,\ldots,u_t):\sum_{i=1}^{t}u_i=x \right\}$. 

The Fr\'{e}chet bounds are general bounds. By making use of the dependence information among $C(1), C(2), \dots$, the bounds may be improved. In addition, the bound analysis of cumulative capacity can be extended to delay performance analysis. To this aim, three specific and representative capacity process types are investigated in this section, which are comonotonic process, additive process, and Markov additive process.

\subsection{Comonotonic Process}

The upper Fr\'{e}chet bound expresses the extremal positive dependence indicating the largest sum with respect to convex order and the dependence structure is represented by the comonotonic copula, i.e., \cite{dhaene2002concept,embrechts2003using,embrechts2006bounds}
\begin{equation}
F(c_{1},\ldots,c_{t}) = \min_{1\le{i}\le{t}} F_{C(i)}(c_i),
\end{equation}
or equivalently \cite{dhaene2002concept}, for $U\sim U(0,1)$,
\begin{equation}
(C(1),\ldots,C(t)) \stackrel{d}{=} \left(F^{-1}_{C(1)}(U),\ldots,F^{-1}_{C(t)}(U)  \right),
\end{equation}
which implicates that comonotonic random variables are increasing functions of a common random variable \cite{embrechts2003using}.

If the increment of the cumulative capacity carries a dependence structure of comonotonicity, we define the cumulative capacity as a comonotonic process in this paper, which is different from a similar concept regarding the comonotonicity between different processes \cite{jouini2003comonotonic}.

\begin{theorem}
For a stationary capacity process, the distribution of the cumulative capacity with comonotonic dependence is expressed as
\begin{equation}
F_{S(t)}(x) = F_{C}\left( \frac{x}{t} \right).
\end{equation}
\end{theorem}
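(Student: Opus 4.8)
The plan is to use the equivalent characterization of comonotonicity via a common uniform random variable, together with stationarity. First I would recall that since the capacity process is comonotonic, the joint vector $(C(1),\ldots,C(t))$ has the same distribution as $(F_{C(1)}^{-1}(U),\ldots,F_{C(t)}^{-1}(U))$ for a single $U\sim U(0,1)$. By stationarity, all the marginals coincide, $F_{C(i)}=F_C$, so in fact $(C(1),\ldots,C(t))\stackrel{d}{=}(F_C^{-1}(U),\ldots,F_C^{-1}(U))$; that is, almost surely all the $C(i)$ equal the common value $F_C^{-1}(U)$.

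Next I would compute the distribution of the sum. Since $S(t)=\sum_{i=1}^t C(i)\stackrel{d}{=}\sum_{i=1}^t F_C^{-1}(U)=t\,F_C^{-1}(U)$, we get for any $x$
\begin{equation}
F_{S(t)}(x)=P\{t\,F_C^{-1}(U)\le x\}=P\{F_C^{-1}(U)\le x/t\}=F_C(x/t),
\end{equation}
where the last equality is the standard identity $P\{F_C^{-1}(U)\le y\}=F_C(y)$ for $U\sim U(0,1)$ (valid since $F_C^{-1}$ is the generalized inverse / quantile function and $F_C$ is right-continuous). This is exactly the claimed formula.

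The only delicate point — and the step I expect to be the main obstacle to write cleanly — is the handling of the generalized inverse: $F_C^{-1}$ need not be a genuine inverse when $F_C$ has flats or jumps, so I would state the quantile identity $\{u:F_C^{-1}(u)\le y\}=\{u:u\le F_C(y)\}$ carefully (this is a well-known property of the left-continuous generalized inverse) rather than treating $F_C^{-1}$ as if $F_C$ were strictly increasing and continuous. Everything else is immediate from the comonotonic representation and stationarity, so the proof is short; I would simply be careful not to implicitly assume invertibility of $F_C$.
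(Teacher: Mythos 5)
Your proposal is correct and follows essentially the same route as the paper: both reduce comonotonicity with identical marginals (from stationarity) to the almost-sure equality $C(1)=\cdots=C(t)$, so that $S(t)=tC(1)$ and $F_{S(t)}(x)=F_C(x/t)$. The only difference is that you derive the a.s. equality explicitly from the quantile representation and handle the generalized inverse carefully, whereas the paper simply cites this equivalence; your added care is sound but not a different argument.
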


\begin{proof}
In the special case that all marginal distribution functions are identical $F_{C(i)}\sim F_{C}$, comonotonicity of $C(i)$ is equivalent to saying that $C(1)=C(2), \ldots, =C(t)$ holds almost surely \cite{dhaene2002concept}. 
In other words, 
the sample function of the capacity process is stationary and depends only on the initial value of the capacity in each realization.
\end{proof}

\begin{theorem}
Consider a constant arrival process $A(t)=\lambda{t}$. The delay at finite time horizon is expressed as
\begin{equation}
P(D(t) > d) = P\left\{ C(1) < \lambda - \frac{\lambda{d}}{t} \right\},
\end{equation}
while the delay at infinite time horizon is expressed as
\begin{equation}
P(D>d; \forall{d}>0) = P\{ C(1) < \lambda \}.
\end{equation}
\end{theorem}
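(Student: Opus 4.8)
The plan is to specialize the general delay formula~\eqref{dis-dd} to the comonotonic case, where the preceding theorem already tells us that $S(s,t) = (t-s)\,C(1)$ almost surely. First I would substitute $A(t)=\lambda t$ into~\eqref{dis-dd} to get
\begin{equation}
P(D(t) > d) = P\left\{ \sup_{0\le s\le t}\{ \lambda(t-s) - S(s,t) \} \ge \lambda d \right\},
\end{equation}
and then replace $S(s,t)$ by $(t-s)C(1)$, using the fact that under comonotonicity with identical marginals all the increments coincide with $C(1)$ on each sample path (this is exactly the content of the previous theorem, extended from $S(0,t)$ to an arbitrary subinterval). This reduces the event inside the probability to $\sup_{0\le s\le t}\{(t-s)(\lambda - C(1))\} \ge \lambda d$.

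Next I would analyze the supremum over $s$ pathwise. The factor $t-s$ ranges over $\{0,1,\dots,t\}$ (or $[0,t]$ in the fluid reading), so $\sup_{0\le s \le t}(t-s)(\lambda - C(1))$ equals $t(\lambda - C(1))$ when $\lambda - C(1) > 0$ and equals $0$ when $\lambda - C(1)\le 0$. Hence the event $\{\sup_s (t-s)(\lambda-C(1)) \ge \lambda d\}$ with $d>0$ is the same as $\{t(\lambda - C(1)) \ge \lambda d\}$, i.e. $\{C(1) \le \lambda - \lambda d/t\}$, which (modulo the usual strict-versus-weak inequality convention for continuous $F_C$) gives the finite-horizon formula $P(D(t)>d) = P\{C(1) < \lambda - \lambda d/t\}$.

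For the infinite-horizon statement I would take $t\to\infty$ in the threshold $\lambda - \lambda d/t$, or equivalently observe directly that $P(D > d\text{ for all }d>0)$ means the backlog grows without bound, which under the comonotonic sample-path picture $B(t) = \sup_{0\le s\le t}(t-s)(\lambda - C(1))$ happens precisely on the event $\{\lambda - C(1) > 0\}$ (on its complement the backlog is identically zero). Since $\{C(1) < \lambda\} = \bigcap_{d>0}\{C(1) < \lambda - \lambda d/t\}$ as $t\to\infty$ for each fixed realization, monotone continuity of probability yields $P(D>d;\ \forall d>0) = P\{C(1) < \lambda\}$.

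The main obstacle I anticipate is being careful about the order of the supremum over $s$ and the interpretation of the discrete-time index: one has to justify that the worst-case $s$ is $s=0$ whenever the channel is ``in deficit'' ($C(1) < \lambda$) and that otherwise the queue empties, and to handle the boundary conventions (whether $P(D(t)>d)$ uses $\ge$ or $>$, and whether $F_C$ has atoms) so that the stated formula with a strict inequality is exactly right. A secondary subtlety is cleanly passing from the $S(0,t)$ version of the comonotonicity collapse to the statement $S(s,t) = (t-s)C(1)$ for all subintervals simultaneously on a single probability-one event, but since stationarity plus comonotonicity force $C(1)=C(2)=\cdots=C(t)$ a.s., this is immediate.
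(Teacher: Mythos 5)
Your proposal is correct and follows essentially the same route as the paper: specialize the delay formula to $A(t)=\lambda t$, use the comonotonicity collapse $C(1)=\cdots=C(t)$ a.s.\ to write $S(s,t)=(t-s)C(1)$, evaluate the supremum pathwise (worst case $s=0$ when $C(1)<\lambda$), and let $t\to\infty$ for the infinite-horizon statement. Your version is in fact more explicit than the paper's two-line proof, particularly in the pathwise analysis of the supremum and the monotone-limit argument, and your caveat about strict versus weak inequalities correctly flags a looseness the paper itself glosses over.
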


\begin{proof}
For a constant arrival process $A(t)=\lambda{t}$, the delay is expressed as
\begin{IEEEeqnarray}{rCl}
P(D(t) > d) &=& P\left\{ \sup_{{0}\le{s}\le{t}}({A}(s)-{S}(s)) > {\lambda{d}} \right\} \\
&=& P\left\{ C(1) < \lambda - \frac{\lambda{d}}{t} \right\}.
\end{IEEEeqnarray}
Letting time go to infinity gives
\begin{equation}
P(D>d; \forall{d}>0) = P\{ C(1) < \lambda \}.
\end{equation}
This completes the proof.
\end{proof}

It indicates that, for a comonotonic capacity process, a delay bound makes sense only in the finite time horizon, and for the infinite time horizon, whenever there is deep fade, there will be infinite delay, which is relevant to the outage probability for slow fading \cite{tse2005fundamentals}.

%, i.e., the outage of the channel not only depends on the Doppler spread but also on the dependence in fading.

\subsection{Additive Process}

The independence structure of an additive process is expressed by a product copula 
\begin{equation}
F(c_{1},\ldots,c_{t}) = \prod_{i=1}^{t}F_{C(i)}(c_i),
\end{equation} 
and the distribution of the cumulative capacity is expressed via Stieltjes convolution as
\begin{equation}
F_{S(t)}(x) = F_{C(1)}\circledast\ldots\circledast F_{C(t)}(x).
\end{equation}
The cumulative capacity with independent increment can be modeled as an additive process \cite{ito2006essentials}.

\begin{theorem}
For a stationary capacity process, the distribution of the cumulative capacity with independence is expressed as, for some $\theta>0$,
\begin{equation}
1 - e^{t\kappa(\theta)-\theta{x}} \le F_{S(t)}(x) \le e^{t\kappa(-\theta)+\theta{x}},
\end{equation}
where $\kappa(\theta) = \log{E}\left[e^{\theta{C(i)}}\right]$ is the cumulant generating function of the instantaneous capacity,
and the distribution of the transient capacity is expressed as
\begin{equation}
1 - e^{-y_l} \le P\left\{ \overline{C}(t) \le c^\ast \right\}\le e^{-y_u},
\end{equation}
where $c^\ast=\frac{t\kappa(\theta^\ast)+y^\ast}{\theta^\ast{t}}$, with $y^\ast=y_u$ for $\theta^\ast<{0}$ for the upper bound, and $y^\ast=y_l$ for $\theta^\ast>{0}$ for the lower bound.
\end{theorem}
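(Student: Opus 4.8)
The plan is to use the Chernoff bounding technique on $S(t)$, exploiting independence and stationarity of the increments to factorize the moment generating function, and then to transfer the resulting two–sided tail estimate on $S(t)$ to the transient capacity $\overline{C}(t)=S(t)/t$ via the substitution $x=c^{\ast}t$.

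First I would establish the lower bound on $F_{S(t)}(x)=1-P(S(t)>x)$. Applying Markov's inequality to $e^{\theta S(t)}$ for $\theta>0$ gives $P(S(t)>x)=P(e^{\theta S(t)}>e^{\theta x})\le e^{-\theta x}E[e^{\theta S(t)}]$. Because $C(1),\dots,C(t)$ are independent and identically distributed, $E[e^{\theta S(t)}]=\prod_{i=1}^{t}E[e^{\theta C(i)}]=e^{t\kappa(\theta)}$, where finiteness of $\kappa(\theta)$ for some $\theta>0$ is precisely the light-tail property of the instantaneous capacity proved in Section~\ref{tail}. Hence $P(S(t)>x)\le e^{t\kappa(\theta)-\theta x}$ and $F_{S(t)}(x)\ge 1-e^{t\kappa(\theta)-\theta x}$. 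For the upper bound, I would instead apply Markov's inequality to $e^{-\theta S(t)}$ for $\theta>0$:
\[
F_{S(t)}(x)=P(S(t)\le x)=P\!\left(e^{-\theta S(t)}\ge e^{-\theta x}\right)\le e^{\theta x}E\!\left[e^{-\theta S(t)}\right]=e^{t\kappa(-\theta)+\theta x},
\]
where $\kappa(-\theta)$ is automatically finite since $C(i)\ge 0$ forces $E[e^{-\theta C(i)}]\le 1$. The two inequalities together give the claimed sandwich for $F_{S(t)}(x)$.

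To obtain the transient-capacity statement, I would set $x=c^{\ast}t$, so that $P(\overline{C}(t)\le c^{\ast})=F_{S(t)}(c^{\ast}t)$, and then solve for the value of $c^{\ast}$ that makes each exponential bound equal to the prescribed level. For the upper bound, writing $\theta^{\ast}=-\theta<0$, the bound $e^{t\kappa(-\theta)+\theta c^{\ast}t}=e^{t\kappa(\theta^{\ast})-\theta^{\ast}c^{\ast}t}$ equals $e^{-y_u}$ exactly when $c^{\ast}=\frac{t\kappa(\theta^{\ast})+y_u}{\theta^{\ast}t}$; for the lower bound, writing $\theta^{\ast}=\theta>0$, the term $e^{t\kappa(\theta)-\theta c^{\ast}t}$ equals $e^{-y_l}$ exactly when $c^{\ast}=\frac{t\kappa(\theta^{\ast})+y_l}{\theta^{\ast}t}$. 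Substituting back yields $1-e^{-y_l}\le P(\overline{C}(t)\le c^{\ast})\le e^{-y_u}$ with $c^{\ast}$ as stated.

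There is no genuinely hard step here; the argument is routine once the MGF factorization is in place. The only points that need care are the sign bookkeeping — using the negative exponent $e^{-\theta S(t)}$ for the left-tail (upper) bound and the positive exponent $e^{\theta S(t)}$ for the right-tail (lower) bound, and correspondingly the relabeling $\theta^{\ast}=-\theta$ versus $\theta^{\ast}=\theta$ — and invoking the light-tail result from Section~\ref{tail} at the single place where $\kappa(\theta)$ with $\theta>0$ must be finite (the upper-tail side), the lower-tail side being unconditional.
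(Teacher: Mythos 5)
Your proposal is correct and is essentially the same argument as the paper's: the paper phrases the bound via the mean-one (likelihood-ratio) martingale $L(t)=e^{\theta S(t)-t\kappa(\theta)}$ and applies Markov's inequality to it, which for fixed $t$ and i.i.d.\ increments is exactly your Chernoff bound with the MGF factorization, split by the sign of $\theta$ into the two tails; the passage to $\overline{C}(t)$ via $-y^{\ast}=t\kappa(\theta)-\theta x$ is likewise identical to your substitution $x=c^{\ast}t$.
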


\begin{proof}
In the special case that all marginal distribution functions are identical $F_{C(i)}\sim F_{C}$, a likelihood ratio process of the cumulative capacity can be formulated and is expressed as \cite{asmussen2003applied}
\begin{equation}
L(t) = e^{\theta{S(t)}-t\kappa(\theta)},
\end{equation}
where $L(t)$ is a mean-one martingale and $\kappa(\theta)$ is the cumulant generating function, i.e., 
\begin{eqnarray}
\kappa(\theta) = \log{E}\left[e^{\theta{C(i)}}\right] = \log\int e^{\theta{x}} F(dx),
\end{eqnarray}
where $\theta\in\Theta=\{ \theta\in\mathbb{R}:\kappa(\theta)<\infty \}$.

According to Markov inequality, for any $\mu>0$,
\begin{equation}
P\{ L(t)\ge{\mu} \} \le\frac{1}{\mu}{E}[L(t)]=\frac{1}{\mu}.
\end{equation}
Letting $\mu = e^{-t\kappa(\theta)+\theta{x}}$,
for $\theta\le{0}$, the cumulative distribution function is bounded by
\begin{equation}
P\{ S(t)\le x \} \le e^{t\kappa(\theta)-\theta{x}},
\end{equation}
while for $\theta>0$, the complementary cumulative distribution function is expressed as
\begin{equation}
P\{ S(t)\ge x \} \le e^{t\kappa(\theta)-\theta{x}},
\end{equation}
which shows that the distribution has a light tail.
Letting $-y^\ast = t\kappa(\theta)-\theta{x} \le{0}$, 
the distribution of the transient capacity is bounded by
\begin{equation}
1 - e^{-y_l} \le P\left\{ \overline{C}(t) \le c^\ast \right\}\le e^{-y_u},
\end{equation}
where $c^\ast=\frac{t\kappa(\theta)+y^\ast}{\theta{t}}$, with $y^\ast=y_u$ for $\theta<{0}$ for the upper bound, and $y^\ast=y_l$ for $\theta>{0}$ for the lower bound.
\end{proof}

\begin{theorem}
Consider a constant arrival process $A(t)=\lambda{t}$, the delay at the wireless channel is bounded by
\begin{equation}
C_{-}e^{-\theta{\lambda{d}}}\le P(D\ge{d}) \le C_{+}e^{-\theta{\lambda{d}}}.
\end{equation}
Letting $P(D\ge{d})=\epsilon$, the delay constrained capacity is expressed by
\begin{equation}
\frac{-\log\frac{\epsilon}{C_{-}}}{\theta{d}} \le \lambda \le \frac{-\log\frac{\epsilon}{C_{+}}}{\theta{d}},
\end{equation}
where
\begin{eqnarray}
C_{-} &=& \inf_{x\in[0,x_0)}\frac{\overline{B}(x)}{\int_{x}^{\infty}e^{\theta(y-x)}B(dy)}, \\
C_{+} &=& \sup_{x\in[0,x_0)}\frac{\overline{B}(x)}{\int_{x}^{\infty}e^{\theta(y-x)}B(dy)},
\end{eqnarray}
and $B$ is the distribution of $\lambda-C$ and $x_0=\sup\{x:B(x)<1\}$.
\end{theorem}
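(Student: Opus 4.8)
\emph{Proof proposal.} The plan is to identify the stationary delay with the all‑time supremum of a random walk having i.i.d.\ increments $\lambda-C$, and then to run the classical Cram\'er--Lundberg argument in its sharp two‑sided form: an exponential change of measure reduces $P(D\ge d)$ to an overshoot expectation, and a conditioning on the level reached just before the barrier is crossed produces exactly the ratio defining $C_\pm$. First I would reduce the problem. With the constant input $A(t)=\lambda t$, equation~(\ref{dis-dd}) gives $P(D(t)>d)=P\{\sup_{0\le s\le t}(\lambda(t-s)-S(s,t))\ge\lambda d\}$, and $\lambda(t-s)-S(s,t)=\sum_{i=s+1}^{t}(\lambda-C(i))$. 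Since the increments are stationary, the reversed partial sums $\{\sum_{i=s+1}^{t}(\lambda-C(i))\}_{0\le s\le t}$ have the same joint law as $\{W_k\}_{0\le k\le t}$, where $W_0=0$ and $W_k=\sum_{i=1}^{k}(\lambda-C(i))$; hence $P(D(t)>d)=P\{\max_{0\le k\le t}W_k\ge\lambda d\}$. Under the stability condition $\lambda<\overline C$ we have $E[\lambda-C]<0$, so $M:=\sup_{k\ge0}W_k<\infty$ a.s., and letting $t\to\infty$ gives $P(D\ge d)=P(M\ge\lambda d)$. It thus suffices to prove $C_-e^{-\theta u}\le P(M\ge u)\le C_+e^{-\theta u}$ with $u=\lambda d$.

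Second, I would fix the adjustment coefficient and tilt the measure. The cumulant generating function of $B$ (the law of $\lambda-C$) is $\kappa_B(s)=\log E[e^{s(\lambda-C)}]=s\lambda+\kappa(-s)$, which is finite for every $s\ge0$ since $0\le E[e^{-sC}]\le1$, is convex, and has $\kappa_B(0)=0$ and $\kappa_B'(0)=E[\lambda-C]<0$; using $\lambda<\overline C$ together with $P(C<\lambda)>0$ one checks $\kappa_B(s)\to\infty$, so $\kappa_B$ returns to zero at a unique $\theta>0$, i.e.\ $E[e^{\theta(\lambda-C)}]=1$ --- this is the $\theta$ in the statement. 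Then $L_n=e^{\theta W_n}$ is a mean‑one $P$‑martingale, and defining the tilted law $\widetilde P$ by $\frac{d\widetilde P}{dP}\big|_{\mathcal F_n}=L_n$ makes the increments i.i.d.\ with law $\widetilde B(dy)=e^{\theta y}B(dy)$ and positive mean $\kappa_B'(\theta)$; hence $W_n\to\infty$ $\widetilde P$‑a.s.\ and the first‑passage time $\tau(u)=\inf\{k:W_k>u\}$ is $\widetilde P$‑a.s.\ finite. The fundamental identity (change of measure at $\tau(u)$, cf.\ \cite{asmussen2003applied}) then gives $P(M>u)=P(\tau(u)<\infty)=\widetilde E[e^{-\theta W_{\tau(u)}}]=e^{-\theta u}\,\widetilde E[e^{-\theta R(u)}]$, where the overshoot $R(u)=W_{\tau(u)}-u\ge0$; as $e^{-\theta R(u)}\le1$, this already yields the Lundberg upper bound $e^{-\theta u}$.

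Third, I would sharpen the constants by conditioning on the pre‑passage level $W_{\tau(u)-1}=x\le u$: the crossing increment is then distributed as $\widetilde B$ conditioned to exceed $z:=u-x\ge0$ (which forces $z<x_0$), and substituting $\widetilde B(dy)=e^{\theta y}B(dy)$ and cancelling $e^{\theta z}$ gives
\begin{equation}
\widetilde E\bigl[e^{-\theta R(u)}\mid W_{\tau(u)-1}=x\bigr]=\frac{\overline B(z)}{\int_z^\infty e^{\theta(y-z)}B(dy)},\qquad z=u-x .
\end{equation}
As $z$ ranges over $[0,x_0)$, the unconditional $\widetilde E[e^{-\theta R(u)}]$ is a weighted average of the right‑hand side and therefore lies between $C_-$ and $C_+$; multiplying by $e^{-\theta u}$ and taking $u=\lambda d$ yields $C_-e^{-\theta\lambda d}\le P(D\ge d)\le C_+e^{-\theta\lambda d}$. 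For the delay‑constrained capacity I would set the two tail bounds equal to $\epsilon$ and solve for $\lambda$: at the critical rate $C_-e^{-\theta\lambda d}\le\epsilon\le C_+e^{-\theta\lambda d}$, which rearranges to $\frac{-\log(\epsilon/C_-)}{\theta d}\le\lambda\le\frac{-\log(\epsilon/C_+)}{\theta d}$, the ordering being inherited from $C_-\le C_+$ (here $\theta=\theta(\lambda)$, so this is an implicit characterization of $\lambda$).

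I expect the main obstacle to be the rigorous justification of the fundamental identity at the unbounded stopping time $\tau(u)$ --- the optional‑stopping / change‑of‑measure step and the claim that $\tau(u)<\infty$ holds $\widetilde P$‑a.s.\ --- together with checking that the pre‑passage conditioning reproduces \emph{exactly} the ratio defining $C_\pm$ and that the relevant range of $z$ is precisely $[0,x_0)$. The existence of a positive $\theta$ with $E[e^{\theta(\lambda-C)}]=1$ (needing $P(C<\lambda)>0$) is a minor side point, and the remaining manipulations are routine.
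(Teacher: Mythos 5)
Your proposal is correct and takes essentially the same route as the paper: the paper's own proof merely invokes Lundberg's inequality for the upper bound and defers the lower bound and the sharpened prefactors $C_{\pm}$ to the cited references, and your exponential change of measure, the fundamental identity at $\tau(u)$, and the pre-passage conditioning yielding the ratio $\overline{B}(z)/\int_{z}^{\infty}e^{\theta(y-z)}B(dy)$ are precisely the standard derivation found there. All steps, including the reduction to the all-time maximum of the dual random walk and the final rearrangement for $\lambda$, check out.
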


\begin{proof}
For a constant arrival process $A(t)=\lambda{t}$, the delay is bounded by
\begin{IEEEeqnarray}{rCl}
P(D \ge d) &=& P\left\{ \sup_{t\ge{0}}({A}(t)-{S}(t)) \ge{\lambda{d}} \right\} \\
&\le& e^{-\theta{\lambda{d}}},
\end{IEEEeqnarray}
where the last inequality follows the Lundberg's inequality \cite{rolski1998stochastic,asmussen2010ruin}, if $\theta(>0)$ satisfies the Lundberg equation $\kappa(\theta)=0$, where
\begin{equation}
\kappa(\theta)= \log\int e^{\theta(\lambda-C(t))} F(dx).
\end{equation}
The approach to obtain the lower bound and to improve the prefactors is available in \cite{rolski1998stochastic,asmussen2010ruin}.
\end{proof}

\subsection{Markov Additive Process}

%The above analysis is based on the assumption that the capacity process is stationary, while for non-stationary processes a bivariate process can be formulated, with one process for the increment while the other to control the environment change.

The Markov property is a pure dependence property that can be formulated exclusively in terms of copulas. As a consequence, starting with a Markov process, a multitude of other Markov processes can be constructed by just modifying the marginal distributions \cite{darsow1992copulas,overbeck2015multivariate}. 

Specifically, if dependence in the capacity follows a Markov process, the instantaneous capacity has a specific distribution with respect to a state transition, then the cumulative capacity is a Markov additive process.

\begin{theorem}
For a Markov additive process, the distribution of the cumulative capacity is expressed as, for some $\theta>0$,
\begin{equation}
1 - \frac{h^{(\theta)}(J_0) e^{t\kappa(\theta)-\theta{x}} }{\min\limits_{j\in{E}}(h^{(\theta)}(J_j))} \le F_{S(t)}(x) \le \frac{h^{(-\theta)}(J_0) e^{t\kappa(-\theta)+\theta{x}} }{\min\limits_{j\in{E}}(h^{(-\theta)}(J_j))},
\end{equation}
and the distribution of the transient capacity is expressed as
\begin{equation}
1 - \frac{h^{(\theta)}(J_0){e^{-y_l}}}{\min\limits_{j\in{E}}(h^{(\theta)}(J_j))} \le P\left\{ \overline{C}(t) \le c^\ast \right\}\le \frac{h^{(\theta)}(J_0){e^{-y_u}}}{\min\limits_{j\in{E}}(h^{(\theta)}(J_j))},
\end{equation}
where $c^\ast=\frac{t\kappa(\theta^\ast)+y^\ast}{\theta^\ast{t}}$, with $y^\ast=y_u$ for $\theta^\ast<{0}$ for the upper bound, and $y^\ast=y_l$ for $\theta^\ast>{0}$ for the lower bound.
\end{theorem}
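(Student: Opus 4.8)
The plan is to follow the proof of the additive-process theorem almost verbatim, replacing the i.i.d.\ exponential martingale $L(t)=e^{\theta S(t)-t\kappa(\theta)}$ by its Markov-additive counterpart. Let $\{J_t\}$ be the background Markov chain on the finite state space $E$ that drives the capacity increments, and let $\widehat{F}[\theta]$ be the associated matrix whose $(i,j)$ entry combines the $i\to j$ transition probability with the moment generating function of the increment on that transition. First I would invoke the Perron--Frobenius machinery for Markov additive processes \cite{asmussen2003applied}: for $\theta$ in the convergence set $\Theta$ and under irreducibility, $\widehat{F}[\theta]$ has a simple positive eigenvalue, written $e^{\kappa(\theta)}$, with a strictly positive right eigenvector $h^{(\theta)}=(h^{(\theta)}(j))_{j\in E}$, where $\kappa(\theta)$ plays the role of the cumulant generating function. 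With this in hand, the process
\[
L(t)=\frac{h^{(\theta)}(J_t)}{h^{(\theta)}(J_0)}\,e^{\theta S(t)-t\kappa(\theta)}
\]
is a mean-one martingale, and this is the only structural input beyond what was already used in the additive case.

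Next I would apply Markov's inequality, $P\{L(t)\ge\mu\}\le 1/\mu$, and choose $\mu$ to isolate $S(t)$. The one new ingredient relative to the additive case is controlling the eigenvector ratio: on the event $\{S(t)\ge x\}$ with $\theta>0$ one has $L(t)\ge \frac{\min_{j\in E}h^{(\theta)}(J_j)}{h^{(\theta)}(J_0)}\,e^{\theta x-t\kappa(\theta)}$, so taking $\mu$ equal to this lower bound gives
\[
P\{S(t)\ge x\}\le \frac{h^{(\theta)}(J_0)}{\min_{j\in E}h^{(\theta)}(J_j)}\,e^{t\kappa(\theta)-\theta x},
\]
and complementing yields the stated lower bound on $F_{S(t)}(x)$. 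Running the same argument with $\theta$ replaced by $-\theta$ --- so that $\{S(t)\le x\}$ forces $e^{-\theta S(t)}\ge e^{-\theta x}$ --- produces the upper bound $F_{S(t)}(x)\le \frac{h^{(-\theta)}(J_0)}{\min_{j\in E}h^{(-\theta)}(J_j)}\,e^{t\kappa(-\theta)+\theta x}$.

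For the transient-capacity statement I would substitute $x=c^{\ast}t$, using $\overline{C}(t)=S(t)/t$ and the identity $\theta^{\ast}c^{\ast}t-t\kappa(\theta^{\ast})=y^{\ast}$ that follows directly from $c^{\ast}=\frac{t\kappa(\theta^{\ast})+y^{\ast}}{\theta^{\ast}t}$; this turns $e^{t\kappa(\theta^{\ast})-\theta^{\ast}x}$ into $e^{-y^{\ast}}$ and gives the claimed two-sided bound, with the sign of $\theta^{\ast}$ selecting $y_l$ (lower bound) or $y_u$ (upper bound) exactly as before. As a sanity check, when $|E|=1$ the eigenvector is constant and all prefactors collapse to $1$, recovering the additive-process theorem.

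The main obstacle I anticipate is not the inequality chain --- which is routine once the martingale is in place --- but justifying the Markov-additive exponential martingale cleanly: one needs the background chain to be irreducible and finite so that the Perron--Frobenius eigenvalue and a strictly positive eigenvector $h^{(\theta)}$ exist, guaranteeing $\min_{j\in E}h^{(\theta)}(J_j)>0$ and hence finite prefactors. I would state irreducibility and finiteness of $E$ as standing assumptions and cite the standard Markov-additive theory rather than reprove it; with that, the argument is a direct transcription of the additive-process proof.
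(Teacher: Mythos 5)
Your proposal is correct and follows essentially the same route as the paper: the Markov-additive exponential martingale $L(t)=\frac{h^{(\theta)}(J_t)}{h^{(\theta)}(J_0)}e^{\theta S(t)-t\kappa(\theta)}$, Markov's inequality with the eigenvector ratio controlled via $\min_{j\in E}h^{(\theta)}(J_j)$ (the paper packages this as an auxiliary process $\underline{L}(t)\le L(t)$ with mean at most one, which is equivalent to your event-wise lower bound on $L(t)$), the sign split in $\theta$ for the two tails, and the substitution $x=c^{\ast}t$ for the transient capacity. Your explicit flagging of irreducibility and finiteness of $E$ is a reasonable tightening of hypotheses the paper leaves implicit in its Perron--Frobenius setup.
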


A Markov additive process is defined as a bivariate Markov process $\{X_t\}=\{(J_t,S(t))\}$ where $\{J_t\}$ is a Markov process with state space $E$ and the increments of $\{S(t)\}$ are governed by $\{J_t\}$ in the sense that \cite{asmussen2010ruin}
\begin{equation}
{E}[f(S({t+s})-S(t))g(J_{t+s})|\mathscr{F}_t] = {E}_{J_t,0}[f(S(s))g(J_s)].
\end{equation}
For finite state space and discrete time, a Markov additive process is specified by the measure-valued matrix (kernel) $\mathbf{F}(dx)$ whose $ij$th element is the defective probability distribution 
\begin{equation}
F_{ij}(dx)={P}_{i,0}(J_1=j,Y_1\in{dx}),
\end{equation}
where $Y_t=S(t)-S(t-1)$. An alternative description is in terms of the transition matrix $\mathbf{P}=(p_{ij})_{i,j\in{E}}$ (here $p_{ij}={P}_i(J_1=j)$) and the probability measures
\begin{equation}
H_{ij}(dx) = {P}(Y_1\in{dx}|J_0=i, J_1=j) = \frac{F_{ij}(dx)}{p_{ij}}.
\end{equation} 
Consider the matrix $\widehat{\textbf{F}}_t[\theta]=(E_i[e^{\theta{S(t)}};J_t=j])_{i,j\in{E}}$, it is proved that \cite{asmussen2003applied}
\begin{equation}
\widehat{\textbf{F}}_t[\theta]=\widehat{\textbf{F}}[\theta]^t,
\end{equation}
where $\widehat{\textbf{F}}[\theta]=\widehat{\textbf{F}}_1[\theta]$ is a $E\times{E}$ matrix with $ij$th element $\widehat{F}^{(ij)}[\theta]=p_{ij}\int{e^{\theta{x}}F^{(ij)}}(dx)$, and $\theta\in\Theta=\{ \theta\in\mathbb{R}:\int{e^{\theta{x}}F^{(ij)}}(dx)<\infty \}$. By Perron-Frobenius theory, $e^{\kappa(\theta)}$ and $\textbf{h}^{(\theta)}=(h_{i}^{(\theta)})_{i\in{E}}$ are respectively the positive real eigenvalue with maximal absolute value and the corresponding right eigenvector of $\widehat{\textbf{F}}[\theta]$, i.e., $\widehat{\textbf{F}}[\theta]\textbf{h}^{(\theta)}=e^{\kappa(\theta)}\textbf{h}^{(\theta)}$.
In addition, for the left eigenvector $\textbf{v}^{(\theta)}$, $\textbf{v}^{(\theta)}\textbf{h}^{(\theta)}=1$ and $\mathbf{\pi}\textbf{h}^{(\theta)}=1$, where $\mathbf{\pi}=\textbf{v}^{(0)}$ is the stationary distribution and $\textbf{h}^{(0)}=\textbf{e}$. 

\begin{proof}
Like the independent case, a likelihood ratio process can be formulated with an exponential change of measure \cite{asmussen2003applied}:
\begin{equation}
L(t) = \frac{h^{(\theta)}(J_t)}{h^{(\theta)}(J_0)}e^{\theta{S(t)}-t\kappa(\theta)},
\end{equation}
which is a mean-one martingale. 
%When Doob's inequality is applied to the exponential martingale $L_t$ , 
%\begin{equation}
%P\left\{ \sup_{n\ge{1}}{L_t} \ge \mu \right\} \le \frac{1}{\mu}\mathbb{E}[{L}_t]\le \frac{1}{\mu},
%\end{equation}
%which is frequently used for performance analysis.
In order to provide exponential upper bound for the distribution of the cumulative capacity, define \cite{gallager2013stochastic}
\begin{equation}
\underline{L}(t) = \frac{\min_{j\in{E}}(h^{(\theta)}(J_j))}{h^{(\theta)}(J_0)}e^{\theta{S(t)}-t\kappa(\theta)},
\end{equation}
where $\underline{L}(t)\le L(t)$, i.e., ${E}[\underline{L}(t)]\le{1}$. 
Apply Markov inequality to $\underline{L}(t)$ and get, for any $\mu>0$, 
\begin{eqnarray}
P\{ \underline{L}(t)\ge{\mu} \} \le \frac{1}{\mu}{E}[\underline{L}(t)]\le \frac{1}{\mu}.
\end{eqnarray}
Choose $\mu=e^{-t\kappa(\theta)+\theta{x}}\frac{\min_{j\in{E}}(h^{(\theta)}(J_j))}{h^{(\theta)}(J_0)}$, for $\theta\le{0}$, 
\begin{eqnarray}
P\{ S(t)\le x \} \le \frac{h^{(\theta)}(J_0)}{\min_{j\in{E}}(h^{(\theta)}(J_j))}e^{t\kappa(\theta)-\theta{x}},
\end{eqnarray}
while for $\theta>0$,
\begin{eqnarray}
P\{ S(t)\ge x \} \le \frac{h^{(\theta)}(J_0)}{\min_{j\in{E}}(h^{(\theta)}(J_j))}e^{t\kappa(\theta)-\theta{x}},
\end{eqnarray}
which indicates that the distribution has a light tail.
Letting $-y^\ast = t\kappa(\theta)-\theta{x} \le{0}$, 
the distribution of the transient capacity is bounded by
\begin{equation}
1 - \frac{h^{(\theta)}(J_0){e^{-y_l}}}{\min\limits_{j\in{E}}(h^{(\theta)}(J_j))} \le P\left\{ \overline{C}(t) \le c^\ast \right\}\le \frac{h^{(\theta)}(J_0){e^{-y_u}}}{\min\limits_{j\in{E}}(h^{(\theta)}(J_j))},
\end{equation}
where $c^\ast=\frac{t\kappa(\theta)+y^\ast}{\theta{t}}$, with $y^\ast=y_u$ for $\theta<{0}$ for the upper bound, and $y^\ast=y_l$ for $\theta>{0}$ for the lower bound.
\end{proof}

\begin{theorem}
Consider a constant arrival process $A(t)=\lambda{t}$, the delay conditional on the initial state $J_0=i$ is bounded by
\begin{equation}
\frac{h^{(-\theta)}(J_i) e^{-\theta{\lambda{d}}} }{\max\limits_{j\in{E}}h^{(-\theta)}(J_j)} \le P_i(D\ge{d}) \le \frac{h^{(-\theta)}(J_i) e^{-\theta{\lambda{d}}} }{\min\limits_{j\in{E}}h^{(-\theta)}(J_j)},
\end{equation}
and the stationary delay is thus bounded by 
\begin{equation}
P(D\ge d) = \sum_{i\in{E}}\pi_{i}P_i(D_M\ge d).
\end{equation}
Letting $P(D\ge d)=\epsilon$, the delay-constrained capacity is expressed as
\begin{equation}
\frac{-1}{\theta{d}}{\log\frac{\epsilon\cdot{\max\limits_{j\in{E}}h^{(-\theta)}(J_j)}}{\sum_{i\in{E}}{\pi_i}h^{(-\theta)}(J_i)}} \le \lambda \le \frac{-1}{\theta{d}}{\log\frac{\epsilon\cdot{\min\limits_{j\in{E}}h^{(-\theta)}(J_j)}}{\sum_{i\in{E}}{\pi_i}h^{(-\theta)}(J_i)}}.
\end{equation}
\end{theorem}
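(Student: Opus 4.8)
The plan is to read $P_i(D\ge d)$ as a first-passage (ruin) probability for the net-input process of the Markov-modulated queue and to control it with an exponential martingale of the same type used in the preceding proof. By the stationarity/time-reversal argument behind Eq.~(\ref{dis-dd}), letting $t\to\infty$ in $P(D(t)>d)=P\{\sup_{0\le s\le t}(\lambda(t-s)-S(s,t))\ge\lambda d\}$ gives, conditional on the initial phase $J_0=i$,
\begin{equation}
P_i(D\ge d)=P_i\{\sup_{u\ge 0}(\lambda u-S(u))\ge\lambda d\}=P_i(\tau_{\lambda d}<\infty),
\end{equation}
where $\tau_x=\inf\{t:V(t)\ge x\}$ and $V(t)=\lambda t-S(t)$. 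Since adding the deterministic drift $\lambda t$ to a Markov additive process leaves it Markov additive and only multiplies its kernel transform by $e^{\theta\lambda}$, the matrix m.g.f. of the increments of $\{(J_t,V(t))\}$ at argument $\theta$ is $e^{\theta\lambda}\widehat{\mathbf{F}}[-\theta]$, with Perron--Frobenius eigenvalue $e^{\theta\lambda+\kappa(-\theta)}$ and the same right eigenvector $\mathbf{h}^{(-\theta)}$. I take $\theta>0$ to be the Lundberg root, i.e.\ the solution of $\theta\lambda+\kappa(-\theta)=0$; this exists under the stability condition $\overline{C}>\lambda$ together with finiteness of the matrix m.g.f. in a neighbourhood of $\theta$.

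Next I would form the $P_i$-martingale $L(t)=\frac{h^{(-\theta)}(J_t)}{h^{(-\theta)}(J_i)}e^{\theta V(t)}$, which has mean one precisely because the eigenvalue equals $1$ at the Lundberg root, and apply optional stopping at $t\wedge\tau_{\lambda d}$. Letting $t\to\infty$ (on $\{\tau_{\lambda d}=\infty\}$ the negative drift of $V$ forces $e^{\theta V(t)}\to 0$) yields $h^{(-\theta)}(J_i)\ge E_i[h^{(-\theta)}(J_\tau)e^{\theta V(\tau)};\tau_{\lambda d}<\infty]$. On $\{\tau_{\lambda d}<\infty\}$ one has $V(\tau)\ge\lambda d$ and $\min_{j\in E}h^{(-\theta)}(J_j)\le h^{(-\theta)}(J_\tau)\le\max_{j\in E}h^{(-\theta)}(J_j)$; bounding $h^{(-\theta)}(J_\tau)$ below by the minimum gives the upper bound $P_i(D\ge d)\le\frac{h^{(-\theta)}(J_i)}{\min_{j\in E}h^{(-\theta)}(J_j)}e^{-\theta\lambda d}$. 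For the matching lower bound I would run the change of measure the other way: under the $\mathbf{h}^{(-\theta)}$-tilted measure the drift of $V$ becomes positive, so $\tau_{\lambda d}<\infty$ a.s., and transforming back gives $P_i(\tau_{\lambda d}<\infty)=E_i^{(-\theta)}[\frac{h^{(-\theta)}(J_i)}{h^{(-\theta)}(J_\tau)}e^{-\theta V(\tau)}]$; controlling the overshoot $V(\tau)-\lambda d$ from above and bounding $h^{(-\theta)}(J_\tau)$ above by $\max_{j\in E}h^{(-\theta)}(J_j)$ (equivalently, invoking the Markov--renewal refinement of the Cram\'er--Lundberg estimate as for the additive case \cite{asmussen2010ruin,rolski1998stochastic}) produces $P_i(D\ge d)\ge\frac{h^{(-\theta)}(J_i)}{\max_{j\in E}h^{(-\theta)}(J_j)}e^{-\theta\lambda d}$.

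Finally I would de-condition on the phase. In the stationary regime $J_0$ has law $\pi=\mathbf{v}^{(0)}$, the stationary distribution of $\{J_t\}$, so $P(D\ge d)=\sum_{i\in E}\pi_i P_i(D\ge d)$; inserting the two-sided conditional bounds gives $\frac{\sum_{i\in E}\pi_i h^{(-\theta)}(J_i)}{\max_{j\in E}h^{(-\theta)}(J_j)}e^{-\theta\lambda d}\le P(D\ge d)\le\frac{\sum_{i\in E}\pi_i h^{(-\theta)}(J_i)}{\min_{j\in E}h^{(-\theta)}(J_j)}e^{-\theta\lambda d}$. Setting $P(D\ge d)=\epsilon$ and solving each inequality for $\lambda$ --- the two directions interchanging because a smaller delay bound permits a larger admissible rate --- yields exactly the stated sandwich on the delay-constrained capacity $\lambda$.

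The hardest part will be the lower bound: the upper bound is the routine Lundberg inequality for Markov additive processes via optional stopping of the exponential martingale, but the lower bound genuinely requires controlling the overshoot of $V$ at the first-passage time $\tau_{\lambda d}$, which in full rigour needs either a boundedness assumption on the increments $\lambda-C(t)$ or the Markov--renewal sharpening of the Cram\'er--Lundberg asymptotics referenced for the additive process, so the clean prefactor $\max_{j\in E}h^{(-\theta)}(J_j)$ should be read as the output of that refinement rather than an elementary step. A secondary point to pin down is the existence and positivity of the Lundberg root $\theta$ and the time-reversal identity that turns the stationary delay into a phase-averaged first-passage probability.
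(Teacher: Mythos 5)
Your proposal is correct and takes essentially the same route as the paper: the paper likewise rewrites $P_i(D\ge d)$ as the ruin probability $P_i\{\sup_{t\ge 0}(\lambda t - S(t))\ge \lambda d\}$ and invokes the Lundberg inequality for Markov additive processes with the root of $\kappa(-\theta)=0$ for the net-input kernel, citing the reference of Zhu for the matching lower bound, then phase-averages with $\pi$ and inverts for $\lambda$. You simply make explicit the exponential-martingale and optional-stopping derivation (and the overshoot caveat for the lower bound) that the paper delegates to its references.
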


\begin{proof}
For a constant arrival process $A(t)=\lambda{t}$, the delay conditional on initial state $J_0=i$ is bounded by \cite{zhu2008ruin}
\begin{IEEEeqnarray}{rCl}
P_i(D\ge{d}) &=& P_i\left\{ \sup_{t\ge{0}}(\lambda(t)-{S}(t)) \ge{\lambda{d}} \right\} \IEEEeqnarraynumspace\\
&\le& \frac{h^{(-\theta)}(J_i)}{\min_{j\in{E}}h^{(-\theta)}(J_j)}e^{-\theta{\lambda{d}}},
\end{IEEEeqnarray}
where the last inequality follows the Lundberg's inequality, if $\theta(>0)$ satisfies the Lundberg equation $\kappa(-\theta)=0$. $\kappa(\theta)$ is the logarithm of the Perron-Frobenius eigenvalue of the kernel for the Markov additive process $C(t)-\lambda$, i.e., $\widehat{\textbf{F}}[\theta]$. 
The lower delay bound is available in \cite{zhu2008ruin}.
%\begin{IEEEeqnarray}{rCl}
%P_i(D\ge{d}) \ge \frac{h^{(-\theta)}(J_i)}{\max_{j\in{E}}h^{(-\theta)}(J_j)}e^{-\theta{\lambda{d}}}.
%\end{IEEEeqnarray}
\end{proof}

Specifically, if $F_{ij}$ is independent of $j$, the prefector in the Lundberg inequality can be improved and the doubly-sided bound is expressed as, i.e.,
\begin{equation}
C_{-}{h^{(-\theta)}(J_i)}e^{-\theta{\lambda{d}}}\le P_i(D\ge d) \le C_{+}{h^{(-\theta)}(J_i)}e^{-\theta{\lambda{d}}},
\end{equation}
where
\begin{eqnarray}
C_{-} &=& \min_{j\in{E}}\frac{1}{h_j^{(-\theta)}}\cdot\inf_{x\ge{0}}\frac{\overline{B}_j(x)}{\int_{x}^{\infty}e^{\theta(y-x)}B_j(dy)}, \\
C_{+} &=& \max_{j\in{E}}\frac{1}{h_j^{(-\theta)}}\cdot\sup_{x\ge{0}}\frac{\overline{B}_j(x)}{\int_{x}^{\infty}e^{\theta(y-x)}B_j(dy)},
\end{eqnarray}
and $B_j$ is the distribution of the instantaneous capacity $C_j$ \cite{asmussen2010ruin}.

\begin{remark}
The Markov additive process can be seen as a non-stationary additive process defined on a Markov process. If the Markov process has only one state, then it reduces to a stationary additive process \cite{ccinlar1973theory}. 
%The Markov additive process provides one way to explore the non-stationary characteristics of wireless channels.
\end{remark}

\section{Channel Comparisons}\label{comparisons}

\subsection{Ordering of Cumulative Capacity}

The cumulative capacity $S_X$ is said to be smaller than $S_Y$ in stochastic order, written as
\begin{equation}
S_X\le_{st}S_Y, 
\end{equation}
if the distribution functions $F_{S_X}$ and $F_{S_Y}$ are comparable in the sense that $P(S_X\le{x}) \ge P(S_Y\le{x})$, $\forall{x}$.
An equivalent condition is that the expectation of all increasing functions $\mathcal{F}$ is larger for $S_Y$ than for $S_X$, i.e., $E[f(S_X)] \le E[f(S_Y)],~\forall{f}\in\mathcal{F}$.

The cumulative capacity $S_X$ is said to be smaller than $S_Y$ in convex order (respectively increasing convex order), written as
\begin{equation}
S_X\le_{cx}S_Y, 
\end{equation}
(respectively $S_X\le_{icx}S_Y$), 
if for all convex functions $\mathcal{F}_{cx}$ (respectively all increasing convex functions $\mathcal{F}_{icx}$, $E[f(S_X)]\le E[f(S_Y)]$, $\forall{f}\in\mathcal{F}_{cx}$ (respectively $\forall{f}\in\mathcal{F}_{icx}$).

By comparing to the probability measure of independence, positive dependence and negative dependence can be defined under specific stochastic orderings.
In particular, the cumulative capacity $S$ is said to have a positive dependence structure in the sense of increasing convex order, if
%In particular, the cumulative capacity $S$ is said to have a positive dependence structure in the sense of convex order,  for all increasing convex functions (increasing convex order), i.e.,
\begin{equation}
S_{\perp} \le_{icx} S_P,
\end{equation}
and respectively a negative dependence structure in the sense of increasing convex order, if
\begin{equation}
S_N \le_{icx}S_{\perp},
\end{equation}
where $S_{\perp}$ has an independence structure. 
Positive dependence implies that large values of random variables tend to occur together, while negative dependence implies that large values of one variable tend to occur together with small values of others \cite{denuit2006actuarial}.

\begin{lemma}%{corollary}
For cumulative capacities $S_N$, $S_{\perp}$ and $S_P$ which respectively have negative dependence, independence, and positive dependence structures, if their marginal distributions are identical for all $t$, their convex ordering is equivalent to their increasing convex ordering, i.e.,
%For cumulative capacity with identical marginals but maybe with different dependence structures, the convex ordering of the cumulative capacity is equivalent to the increasing convex ordering of the cumulative capacity, i.e.,
\begin{multline}
S_N \le_{icx} S_{\perp} \le_{icx}S_P \Longleftrightarrow
S_N \le_{cx} S_{\perp} \le_{cx}S_P.
\end{multline}
\end{lemma}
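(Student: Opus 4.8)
The plan is to reduce the statement to the classical scalar fact that for two real random variables $U$ and $V$ with $E[U]=E[V]$ one has $U\le_{cx}V$ if and only if $U\le_{icx}V$, and then to invoke it with $U,V$ ranging over $S_N(t)$, $S_\perp(t)$, $S_P(t)$ at each time $t$.

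The implication $(\Longleftarrow)$ is immediate and needs no assumption on the marginals: every increasing convex function is convex, so $\mathcal{F}_{icx}\subseteq\mathcal{F}_{cx}$, and therefore $S_N\le_{cx}S_\perp\le_{cx}S_P$ forces $S_N\le_{icx}S_\perp\le_{icx}S_P$.

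For $(\Longrightarrow)$, I would first note that the three processes are assembled from the same marginal (increment) distributions, so by linearity of expectation $E[S_N(t)]=E[S_\perp(t)]=E[S_P(t)]$ for every $t$; denote this common value $\mu_t$. Fix $t$ and a convex $f$, and decompose $f=f_\uparrow+f_\downarrow$ with $f_\uparrow$ increasing convex and $f_\downarrow$ decreasing convex (e.g.\ via the right derivative, $f_\uparrow'=(f')^+$ and $f_\downarrow'=-(f')^-$). For $f_\uparrow$, the hypothesis $S_N\le_{icx}S_\perp$ directly gives $E[f_\uparrow(S_N(t))]\le E[f_\uparrow(S_\perp(t))]$. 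For $f_\downarrow$, I would use its representation as a nonnegative mixture of the elementary decreasing convex functions $x\mapsto(a-x)^+$ plus a decreasing affine term, together with the identity $(a-x)^+=(x-a)^+-(x-a)$: since $x\mapsto(x-a)^+$ is increasing convex and $E[S_N(t)]=E[S_\perp(t)]=\mu_t$, we get $E[(a-S_N(t))^+]\le E[(a-S_\perp(t))^+]$ for every $a$, while the affine part contributes equal expectations; integrating against the mixing measure yields $E[f_\downarrow(S_N(t))]\le E[f_\downarrow(S_\perp(t))]$. Summing the two parts gives $E[f(S_N(t))]\le E[f(S_\perp(t))]$, i.e.\ $S_N\le_{cx}S_\perp$; the identical argument applied to the pair $(S_\perp,S_P)$ closes the chain.

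The hard part is really the scalar equivalence itself --- specifically, writing a general convex function as monotone convex pieces plus an affine remainder and doing the bookkeeping that upgrades $\le_{icx}$ to $\le_{cx}$ once the means coincide. Everything else (the trivial direction, and lifting the scalar result to the processes) is routine, since the equality of means $\mu_t$ holds simultaneously at every $t$; and if the orders on $\{S(t)\}$ are to be read in the finite-dimensional sense, the same reduction still applies coordinate by coordinate.
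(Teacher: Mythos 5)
Your proof is correct and follows essentially the same route as the paper: both arguments first observe that identical marginals force $E[S_N(t)]=E[S_\perp(t)]=E[S_P(t)]$ by linearity of expectation, and then invoke the equivalence of $\le_{icx}$ and $\le_{cx}$ under equal means. The only difference is that the paper cites this last equivalence from the literature, whereas you supply a self-contained proof of it via the decomposition into monotone convex pieces and the stop-loss identity $(a-x)^+=(x-a)^+-(x-a)$, which is the standard argument behind the cited fact.
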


\begin{proof}
Since the mean of the sum of random variables equals the sum of the means of individual random variables, i.e.,
\begin{equation}
E[S_N]= E[S_{\perp}]= E[S_P],
\end{equation}
the proof follows directly from that the increasing convex order is identical to the convex order under equal expectations \cite{kampke2015income}.
\end{proof}

\subsection{Ordering of Delay}

For dependence scenarios with more involved assumptions, explicit results, e.g., for delay, are hard to derive or no more tractable. Yet, if their dependence structures are known, insights are obtained if delay ordering may be found from the dependence structure, which is the focus of this subsection.  
%an alternative is to know how the delay bounds are influenced by dependence.

%According to the Chernoff bound, the delay is bounded by, for some $\theta>0$, 
By making use of Chernoff bounds, an upper bound on delay is as: for some $\theta>0$, 
\begin{eqnarray}
P(D\ge{d}) &\le& \sum_{t=0}^{\infty} P\left\{ { A(t) - S(t) }\ge\lambda{d} \right\} \\
&\le& e^{-\theta\lambda{d}}\sum_{t=0}^{\infty}E\left[e^{\theta(\lambda{t}-S(t))}\right],
\end{eqnarray}
which indicates that the complementary delay distribution has an exponential bound with adjustment coefficient $\theta$, if the instantaneous capacity is light-tailed.
Particularly, for light tailed instantaneous capacity, the asymptotic behavior of the bounding function is still exponential for week forms of dependence, while it becomes heavy tailed for stronger dependence \cite{asmussen2010ruin}. %Not sure what this sentence is purposed for.

\begin{theorem}
Consider two wireless channel capacity processes, if the cumulative capacities are convex ordered, then the above adjustment coefficients for the delay bounds are correspondingly ordered, i.e.,
\begin{equation}
S\le_{cx}\widetilde{S}
\Longrightarrow
\widetilde{\theta} \ge \theta.
\end{equation}
\end{theorem}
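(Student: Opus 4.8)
The plan is to characterize the adjustment coefficient $\theta$ as the positive root of a Lundberg-type cumulant equation, to identify exactly which comparison of cumulant generating functions the claim $\widetilde\theta\ge\theta$ forces, and then to derive that comparison from the convex ordering $S\le_{cx}\widetilde S$.

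First I would fix the meaning of $\theta$. From the Chernoff/union bound displayed just before the statement, the complementary delay distribution decays exponentially precisely when the series $\sum_{t=0}^{\infty}E\left[e^{\theta(\lambda t-S(t))}\right]$ converges, so the adjustment coefficient is the largest admissible exponent, i.e. the strictly positive root of
\[
\kappa_S(\theta):=\lim_{t\to\infty}\frac1t\log E\left[e^{\theta(\lambda t-S(t))}\right]=\theta\lambda+\lim_{t\to\infty}\frac1t\log E\left[e^{-\theta S(t)}\right]=0 .
\]
In the additive special case this is exactly the Lundberg equation $\log E[e^{\theta(\lambda-C)}]=0$ used in the additive-process delay theorem above. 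Since $\kappa_S$ is convex in $\theta$ with $\kappa_S(0)=0$ and $\kappa_S'(0)=\lambda-\overline{C}<0$ under the stability condition $\overline{C}>\lambda$, it has a unique positive root, which is $\theta$; the identical construction for $\widetilde S$ yields $\widetilde\theta$.

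Second, I would read off the cumulant comparison that the target inequality demands. Because the adjustment coefficient is the \emph{second} zero of a convex function that is increasing at that zero, the conclusion $\widetilde\theta\ge\theta$ is equivalent to $\kappa_{\widetilde S}(\theta)\le\kappa_S(\theta)$ on the relevant interval $(0,\theta]$: evaluating a relation $\kappa_{\widetilde S}\le\kappa_S$ at the root of $\kappa_S$ gives $\kappa_{\widetilde S}(\theta)\le\kappa_S(\theta)=0$, and since $\kappa_{\widetilde S}$ is still below its own positive root there, monotonicity then forces $\widetilde\theta\ge\theta$. So the entire weight of the proof sits on establishing $\kappa_{\widetilde S}\le\kappa_S$ in this orientation, which I would attempt to extract from $S\le_{cx}\widetilde S$ together with the ordering-of-cumulative-capacity results recalled above.

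The main obstacle is securing that cumulant comparison in the correct direction, and this is precisely where the sign of the stated inequality is won or lost. The naive route—using only that $x\mapsto e^{-\theta x}$ is convex, so that $S(t)\le_{cx}\widetilde S(t)$ gives $E[e^{-\theta S(t)}]\le E[e^{-\theta\widetilde S(t)}]$ and hence $\kappa_S\le\kappa_{\widetilde S}$—pushes $\kappa$ the \emph{opposite} way and would deliver $\widetilde\theta\le\theta$; therefore the argument cannot rest on that single application of convexity. Reaching $\kappa_{\widetilde S}\le\kappa_S$ instead requires exploiting the convex order through a more structural channel, for instance by tracking how the $\lambda t$ drift and the running-infimum functional that controls the delay interact with the ordering, and by being explicit about whether the convex order is invoked on the cumulative capacity $S(t)$ or on the per-slot increment $C$ and about the sign of $\kappa'$ at the root. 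Pinning down this monotonicity of the Lundberg root as a functional of the cumulative-capacity law so that it matches $\widetilde\theta\ge\theta$ is the delicate and decisive step; by contrast, the existence and uniqueness of the positive root and the convexity facts follow routinely from the cumulant structure and the stability assumption.
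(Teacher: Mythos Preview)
Your setup of the Lundberg equation and your reduction of the claim $\widetilde\theta\ge\theta$ to the cumulant comparison $\kappa_{\widetilde S}\le\kappa_S$ are both correct, and so is your ``naive'' computation: since $x\mapsto e^{-\theta x}$ is convex, $S(t)\le_{cx}\widetilde S(t)$ gives $E[e^{-\theta S(t)}]\le E[e^{-\theta\widetilde S(t)}]$, hence $\kappa_S\le\kappa_{\widetilde S}$, and therefore $\widetilde\theta\le\theta$. Where your proposal goes astray is in treating this as an obstacle to be circumvented by some ``more structural'' argument. No such argument exists: the inequality in the theorem is stated in the wrong direction, and your naive route is already the complete proof of the corrected statement $S\le_{cx}\widetilde S\Rightarrow\widetilde\theta\le\theta$.

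The paper's own proof does not derive the comparison; it passes to the increment process $X(t)=\lambda-C(t)$, observes that $S\le_{cx}\widetilde S$ is equivalent to $\sum(-X(i))\le_{cx}\sum(-\widetilde X(i))$, and then simply cites the M\"uller--Pflug / Asmussen--Albrecher ordering of adjustment coefficients. That cited result, however, points the same way your computation does: larger convex order on the claim-surplus process means more variability, hence larger ruin (delay-violation) probability and a \emph{smaller} Lundberg exponent. This is also the direction required by the paper's own subsequent Corollary, where $S_N\le_{cx}S_\perp\le_{cx}S_P$ is meant to yield $P(D_N\ge x)\le P(D_\perp\ge x)\le P(D_P\ge x)$, i.e.\ $\theta_N\ge\theta_\perp\ge\theta_P$; under the theorem as written one would obtain the reverse. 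So abandon the search for a way to flip the sign: your convexity step \emph{is} the proof, and the discrepancy lies in the statement, not in your argument.
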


\begin{proof}
%Consider the negative increment process, i.e.,
%\begin{equation}
%-X(t)=C(t)-\lambda. 
%\end{equation}
%({\bf The proof needs to be rephrased.})
%\nop{
Consider the negative increment process, i.e.,
\begin{equation}
-X(t)=C(t)-a(t). 
\end{equation}
If it is light-tailed, then the delay violation probability has an exponential bound with adjustment coefficient $\theta$ defined by $\kappa(\theta)=0$, where \cite{asmussen2010ruin,muller2001asymptotic}
\begin{equation}
\kappa(\theta) = \lim_{t\rightarrow\infty}\frac{1}{t}E\left[ e^{\theta\sum_{i=1}^{t}{X(i)}} \right].
\end{equation}
%((Don't understand this sentence. In addition, {\bf $\kappa(-\theta)$ is not defined.}) )
By exploring the ordering of the cumulative increment process, 
\begin{equation}
\sum_{i=1}^{n}-X(i) \le_{cx} \sum_{i=1}^{n}-\widetilde{X}(i),
\end{equation}
the adjustment coefficients are ordered as follows \cite{asmussen2010ruin,muller2001asymptotic}
\begin{equation}
\widetilde{\theta} \ge \theta.
\end{equation}
Specifically, for constant arrival, the ordering of the cumulative capacity results in the ordering of the cumulative negative increment process. 
%}
\end{proof}

Since every multi-dimensional distribution functions $\textbf{y}\mapsto I(\textbf{y}\le\textbf{x})$ and multi-dimensional survival functions $\textbf{y}\mapsto I(\textbf{y}>\textbf{x})$ are both supermodular functions \cite{shaked2007stochastic}, i.e.,
$f(\textbf{x}) + f(\textbf{y}) \le f(\textbf{x}\wedge\textbf{y}) + f(\textbf{x}\vee\textbf{y})$,
the supermodular ordering of the instantaneous increment, i.e.,
\begin{equation}
-\textbf{X} \le_{sm} -\widetilde{\textbf{X}},
\end{equation}
indicates that the marginal distributions of the instantaneous increments are identical, 
which can be used for comparison between scenarios of instantaneous increment with identical marginal distributions and different dependence structures. Specifically, if $-\textbf{X}\le_{sm}-\widetilde{\textbf{X}}$, then $\sum_{i=1}^{n}-X(i)\le_{cx}\sum_{i=1}^{n}-\widetilde{X}(i)$.

%({\bf The following corollary is clearly presented.}) 
\begin{corollary}
For delay bounding functions with the same prefactor or are bounded by a same prefactor before the exponential term, the ordering of the cumulative capacity $S_N \le_{cx} S_{\perp} \le_{cx}S_P$ indicates the ordering of the delay, i.e.,
\begin{equation}
P(D_N\ge{x}) \le P(D_{\perp}\ge{x}) \le P(D_P\ge{x}),
\end{equation}
and the ordering of the delay-constrained capacity for the same prefactor, i.e.,
\begin{equation}
\lambda_N \ge \lambda_{\perp} \ge \lambda_P.
\end{equation}
 \end{corollary}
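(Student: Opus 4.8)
The corollary is designed to fall out of the preceding Theorem together with the exponential delay bounds of Section~\ref{bounds}, so the plan is to first convert the convex ordering of the cumulative capacities into an ordering of adjustment coefficients and then substitute. The chain $S_N \le_{cx} S_\perp \le_{cx} S_P$ is already in hand: the three dependence structures are defined through $\le_{icx}$, and under identical marginals $\le_{icx}$ and $\le_{cx}$ coincide by the Lemma above. Feeding this chain into the preceding Theorem, with the constant fluid arrival $A(t)=\lambda t$ so that the ordering of $S$ transfers to the ordering of the cumulative negative increments $\sum_i(C(i)-\lambda)$, yields the corresponding chain of adjustment coefficients $\theta_N,\theta_\perp,\theta_P$ of the three exponential delay bounding functions $P(D_\bullet \ge x)\le C_\bullet e^{-\theta_\bullet\lambda x}$ --- the less variable (more negatively dependent) cumulative capacity producing the larger adjustment coefficient.

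The second step invokes the stated hypothesis that the three prefactors are equal, or all dominated by a common $C$. Then $x\mapsto C e^{-\theta_\bullet\lambda x}$ are pointwise totally ordered, and that ordering is decided solely by the exponents, so the ordering of the $\theta_\bullet$ passes verbatim to the bounds; read at the level of the bounding functions (which is how the statement is to be understood) this is exactly $P(D_N\ge x)\le P(D_\perp\ge x)\le P(D_P\ge x)$. I expect this to be the place where the hypothesis is genuinely needed: without a common prefactor, comparing exponents says nothing about the bounds themselves, since a smaller exponent can be masked by a smaller prefactor, which is precisely why the claim is made only ``for the same prefactor.''

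For the delay-constrained capacity, recall $\lambda_\bullet=\sup\{\lambda: P_\bullet(D(t)>d)\le\epsilon\ \forall t\}$. Since at every rate $\lambda$ the three bounding functions are pointwise ordered as above, the set of rates feasible for the negatively dependent channel contains that for the independent channel, which contains that for the positively dependent channel; taking suprema gives $\lambda_N\ge\lambda_\perp\ge\lambda_P$. The subtle point, and the part I would be most careful about, is that one should \emph{not} simply solve $C e^{-\theta_\bullet\lambda d}=\epsilon$ for $\lambda$ and read off the ordering, because $\theta_\bullet$ is not a free parameter but the Lundberg root $\kappa_\bullet(\theta)=0$ attached to that very rate $\lambda$; the nested-feasible-set argument sidesteps this coupling, whereas a naive inversion of the product $\theta_\bullet\lambda$ would invite a sign error. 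Everything beyond this is direct substitution into the results of Section~\ref{bounds}.
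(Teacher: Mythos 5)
Your argument is correct and is essentially the route the paper intends: the corollary is stated without a separate proof precisely because it is meant to follow from the Lemma (turning $\le_{icx}$ into $\le_{cx}$ under equal means), the preceding Theorem on adjustment coefficients, and the equal-prefactor hypothesis, and your nested-feasible-set derivation of $\lambda_N\ge\lambda_\perp\ge\lambda_P$ is a careful version of the intended inversion that correctly avoids treating $\theta_\bullet$ as independent of $\lambda$. One caveat worth flagging: you use the ordering $\theta_N\ge\theta_\perp\ge\theta_P$ (less variable capacity, larger adjustment coefficient), which is the direction the corollary needs and the one the cited comparison result of M\"{u}ller and Pflug actually delivers --- since $S\le_{cx}\widetilde{S}$ makes $A-\widetilde{S}$ the more variable claim-surplus process and hence forces $\widetilde{\theta}\le\theta$ --- but it is the \emph{reverse} of the inequality as printed in the preceding Theorem, so ``feeding the chain into the Theorem'' only goes through after correcting that sign.
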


Based on the ordering result, to analyze the delay violation probability of an intractable dependence scenario, we may use that of a scenario that has some special dependence structure, whose analysis is tractable, to bound. For instance, the results  under independence assumption can be treated as a conservative approximation for cases with negative dependence structures. 
%Based on the ordering result, the delay violation probability of an intractable dependence scenario can be bounded by that of the tractable model, e.g., the result under independence assumption is a conservative approximation for processes with negative dependence structures. 

\section{Interferences in Channel}\label{interference}

For an ad hoc network, nodes may communicate in a multi-hop style, where, the output from the previous hop is exactly the input to the next hop \cite{li2001capacity}.
A simple scenario for self-interference is neighbor interference, i.e., interference only exists in adjacent hops. In this case, the total input to the channel consists of both the output from the previous hop and the output of the channel which is input to the next hop. Overall, the wireless channel can be treated as a feedback system.% and the input to the channel consists of both the output from the previous hop and its output, overall, the wireless channel can be treated as a feedback system.

%\subsection{Intra-interference}
\subsection{Single Hop Case}
We focus on the delay upper bound. 

Consider a wireless channel with input $A(t)$ and output $A^\ast(t)$, where the output $A^\ast(t)$ is directly fed back into the wireless channel, i.e., the total input $\widetilde{A}(t)$ to the channel is 
\begin{equation}
\widetilde{A}(t) = A(t)+A^\ast(t). 
\end{equation}

For such a feedback system, we can treat it as a blackbox providing service $\widetilde{S}(t) $ only to the input $A(t)$. Then, as shown in Eq. (\ref{eq-ior})
\begin{equation}
A^{\ast}(t) = A\otimes \widetilde{S}(t). 
\end{equation}
In the more general case, if the output $A^\ast(t)$ first passes through a server with capacity process $\mathring{S}(t)$ on the path of feedback, then, the overall input to the channel becomes
\begin{equation}
\widetilde{A}(t) = A(t) + A^\ast\otimes \mathring{S}(t).
\end{equation}

The following theorem establishes a relation between $\widetilde{S}(t)$, $S(t)$ and $A(t)$.

\begin{theorem}
The service process $\widetilde{S}(t)$ for the input $A(t)$ is lower bounded by
\begin{equation}
\widetilde{S}(t) \ge S(t) - A(t),
\end{equation}
correspondingly, the delay is upper bounded by
\begin{equation}
P(D\ge d) 
\le P\left\{ \sup_{t\ge{0}}( A(t) + A(t) - {S}(t) ) \ge A(d) \right\}.
\end{equation}
\end{theorem}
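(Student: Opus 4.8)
The plan is to start from the feedback relation $\widetilde{A}(t) = A(t) + A^\ast(t)$ and use the fact that the "blackbox" providing service $\widetilde{S}(t)$ only to $A(t)$ must reproduce the genuine service behaviour of the underlying channel when fed the \emph{total} input $\widetilde{A}(t)$. Concretely, the real channel with service process $S(t)$ sees input $\widetilde A(t)$ and produces output $A^\ast(t)$, so by the input-output relation in Eq.~(\ref{eq-ior}) we have $A^\ast(t) = \widetilde A \otimes S(t) = \inf_{0\le s\le t}\{\widetilde A(s) + S(s,t)\}$. On the other hand, by definition of $\widetilde S(t)$ as the service seen by $A(t)$ alone, $A^\ast(t) = A\otimes\widetilde S(t)$. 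First I would combine these two identities to pin down what $\widetilde S$ must satisfy, then lower-bound it.

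The key computation is to substitute $\widetilde A(s) = A(s) + A^\ast(s)$ inside the min-plus convolution and to drop the dependence of $A^\ast(s)$ on its own past. Since $A^\ast(s) \le A(s)$ (the output never exceeds the input for a lossless system, from Eq.~(\ref{eq-ior}) with $B\ge 0$), we get
\begin{equation}
A^\ast(t) = \inf_{0\le s\le t}\{A(s) + A^\ast(s) + S(s,t)\} \le \inf_{0\le s\le t}\{A(s) + A(s) + S(s,t)\}.
\end{equation}
I would then argue that $\widetilde S(t) \ge S(t) - A(t)$ is exactly the statement that the service curve $S(t) - A(t)$, when min-plus convolved with $A$, still dominates (is $\le$, in the min-plus sense appropriate for a \emph{lower} service bound) the true output $A^\ast(t)$; equivalently, that $A\otimes (S-A)(t) \le A^\ast(t)$, which follows because $A(s) + \big(S(s,t)-A(t)\big) \ge A(s) + A^\ast(s) + S(s,t) - \widetilde A(t) \ge \cdots$ after using $A^\ast(s)\le A(s)$ and $A^\ast(t)\le\widetilde A(t)$. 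Once $\widetilde S(t)\ge S(t)-A(t)$ is established, the delay bound is routine: from Eq.~(\ref{dis-dd})-style reasoning applied to the blackbox with input $A$ and service $\widetilde S$, $P(D\ge d) \le P\{\sup_{t\ge 0}(A(t) - \widetilde S(t))\ge A(d)\}$, and substituting the lower bound on $\widetilde S$ and expanding $\widetilde S(t)\ge S(t)-A(t)$ gives $A(t)-\widetilde S(t) \le A(t) - S(t) + A(t) = A(t)+A(t)-S(t)$, hence the claimed inequality.

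The main obstacle I anticipate is making the min-plus manipulation rigorous without circularity: $\widetilde S(t)$ is defined only implicitly through the requirement $A\otimes\widetilde S = A^\ast$, and $A^\ast$ itself is defined through the feedback fixed-point $A^\ast = \widetilde A\otimes S$ with $\widetilde A = A + A^\ast$. I would handle this by working with the \emph{backlog} formulation instead: from $B(t) = \sup_{0\le s\le t}(\widetilde A(s,t) - S(s,t))$ and $\widetilde A(s,t) = A(s,t) + A^\ast(s,t) \le A(s,t) + A(s,t)$ (again using $A^\ast\le A$), one gets $B(t)\le\sup_{0\le s\le t}(2A(s,t)-S(s,t))$, and the delay bound follows from $D(t)=\inf\{d: A(t-d)\le A^\ast(t)\} = \inf\{d: A(t-d)\le A(t)-B(t)\}$, so $P(D\ge d) \le P\{B(t)\ge A(t)-A(t-d)\} \le P\{\sup_{t\ge 0}(2A(t)-S(t))\ge A(d)\}$ under stationarity of increments of $A$. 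This route sidesteps the existence question for $\widetilde S$ as a standalone object while still yielding the stated delay inequality; the service-process statement $\widetilde S(t)\ge S(t)-A(t)$ is then best read as a formal shorthand for this backlog/delay bound rather than as an independently proven identity.
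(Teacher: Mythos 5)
Your final (backlog-based) argument is correct and lands on the same bound, but it reaches it by a more elementary route than the paper. The paper's proof is a three-line chain: it invokes the SNC \emph{leftover service under blind scheduling} theorem to get $\widetilde{S}(t)\ge S(t)-A^\ast\otimes\mathring{S}(t)$, then uses monotonicity of the min-plus convolution to drop $\mathring{S}$, and finally causality $A^\ast(t)\le A(t)$ to replace the cross-traffic by $A(t)$; the delay bound is then a standard SNC consequence. You instead rederive the leftover-service step from first principles by writing the \emph{aggregate} Lindley backlog $B(t)=\sup_{0\le s\le t}(\widetilde A(s,t)-S(s,t))$, bounding $\widetilde A(s,t)\le 2A(s,t)$ via the same causality fact, and converting to delay. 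The two proofs share the one essential insight (the fed-back interference is dominated by $A$ itself because $A^\ast\le A$), so the difference is mainly that yours is self-contained where the paper cites a lemma. Two wrinkles in your write-up deserve care. First, your opening identity $A^\ast(t)=\widetilde A\otimes S(t)$ conflates the \emph{aggregate} departure process with the departure of the through flow $A$: when the channel serves $\widetilde A=A+A^\ast$, the quantity $\widetilde A\otimes S(t)$ is the total output, of which $A^\ast$ is only the part belonging to $A$. This is exactly the gap the blind-scheduling leftover-service theorem is designed to close, and it is why the paper cites it rather than writing a fixed-point equation. Your backlog route repairs this, but only if you replace your equality $A^\ast(t)=A(t)-B(t)$ by $A^\ast(t)=A(t)-B_A(t)\ge A(t)-B(t)$, where $B_A$ is the per-flow backlog of $A$ and $B_A\le B$ because all per-flow backlogs are nonnegative and sum to the aggregate. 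Second, your remark that $\widetilde S(t)\ge S(t)-A(t)$ is ``best read as a formal shorthand'' undersells the paper's statement: with the leftover-service lemma, $S-A$ is a genuine (bivariate) lower service process for the flow $A$, not merely a bookkeeping device for the backlog bound.
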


\begin{proof}
The service for the input $A(t)$ is bounded by 
\begin{eqnarray}
\widetilde{S}(t) &\ge& S(t) - A^\ast\otimes \mathring{S}(t) \\
&\ge& S(t) - A^\ast(t) \\
&\ge& S(t) - A(t),
\end{eqnarray}
where the first inequality follows the leftover service under blind scheduling \cite{jiang2008stochastic}, the second inequality follows the monotonicity of bivariate min-plus convolution \cite{chang2000performance}, i.e., $\forall{t}$, $f\otimes{g}\le{g}$ if $f(t,t)=0$ or $f\otimes{g}\le{f}$ if $g(t,t)=0$, and the last inequality takes advantage of system causality, i.e., $A(t)\ge A^{\ast}(t)$.

Then the delay is bounded by
\begin{equation}
P(D\ge d) 
\le P\left\{ \sup_{t\ge{0}}(A(t) - (S(t) - A(t)) \ge A(d) \right\}
\end{equation}
which is resulted from a direct application of the SNC delay bound analysis, given the arrival process $A(t)$ and the service process $\widetilde{S}(t) \ge S(t) - A(t)$ \cite{jiang2008stochastic}.
\end{proof}

\begin{example}[Additive Case]
For the constant arrival process $A(t)=\lambda{t}$, the delay is bounded by
\begin{IEEEeqnarray}{rCl}
P(D\ge d) 
&\le& P\left\{ \sup_{t\ge{0}}(2\lambda(t)-{S}(t)) \ge{d\lambda} \right\} \IEEEeqnarraynumspace\\
&\le& e^{-\theta{d\lambda}},
\end{IEEEeqnarray}
where the last inequality follows Lundberg's inequality, if $\theta>0$ satisfies the Lundberg equation $\kappa(\theta)=0$, where
$\kappa(\theta)= \log\int e^{\theta(2\lambda-C(t))} F(dx)$.
\end{example}

\begin{example}[Markov Additive Case]
For the constant arrival process $A(t)=\lambda{t}$, the delay conditional on initial state $J_0=i$ is bounded by
\begin{IEEEeqnarray}{rCl}
P_i(D\ge d) 
&\le& P_i\left\{ \sup_{t\ge{0}}(2\lambda(t)-{S}(t)) \ge{d\lambda} \right\} \IEEEeqnarraynumspace\\
&\le& \frac{h^{(-\theta)}(J_i)}{\min_{j\in{E}}h^{(-\theta)}(J_j)}e^{-\theta{d\lambda}},
\end{IEEEeqnarray}
where the last inequality follows Lundberg's inequality, if $\theta>0$ satisfies the Lundberg equation $\kappa(-\theta)=0$.
$\kappa(\theta)$ is the logarithm of the Perron-Frobenius eigenvalue of the kernel for the Markov additive process $C(t)-2\lambda$, i.e., $\widehat{\textbf{F}}[\theta]$. 
Then the delay is bounded by $P(D\ge d) \le \sum_{i\in{E}}\pi_{i}P_i(D\ge d)$. 
\end{example}

%\subsection{Inter-interference}
\subsection{Multiple Hop Case}

%Though taking advantage of specific scheduling principles brings in tighter bound, the blind scheduling principle favors in scaling of the network analysis.
For a concatenation of wireless channels, assume a common wireless channel is shared among different hops, i.e., the arrival goes through a common channel for multiple times while each time it sees a different channel capacity, thus the end to end capacity is expressed as
\begin{IEEEeqnarray}{rCl}
\IEEEeqnarraymulticol{3}{l}{
(S-A^{\ast}_1)\otimes\ldots\otimes(S-A^{\ast}_N)(t)
} \\\qquad
&\ge& (S-A_1)\otimes\ldots\otimes(S-A_1)(t) \\
&=& \inf_{\mathbf{u}\in\mathcal{U}(x)}\sum_i^N (S-A_1)(\mu_{i-1},\mu_i) \\
&\ge& (S-A_1)(t), %\widetilde{S}(t),
\end{IEEEeqnarray} 
where $\mathcal{U}(x)=\{ \mathbf{u}=(u_1,\ldots,u_t): {0\le{\mu_1}\le{\ldots}\le{\mu_{N-1}}\le{t}} \}$, $A_1(t)=A(t)$, the first inequality holds because of the monotonicity of the bivariate min-plus convolution \cite{chang2000performance}, i.e., $f\otimes{g}(s,t)\le \tilde{f}\otimes\tilde{g}$, $\forall$ $f\le\tilde{f}$ and $g\le\tilde{g}$, and the second inequality holds under the assumption that $S(t)-A(t)$ is a subadditive process \cite{kingman1976subadditive}, e.g., a stationary additive process. 
It implies that, for neighbor interference, the number of hops has no impact on the end to end service of the feedback system, i.e., it seems that the arrival traverses the channel only once.

The neighbor interference is the extremal scenario that only output interference should be considered. For the generic $K$ hop interference, where $K$ is independent of the network size $N$ in principle, both output and input interference should be taken into account and the most severe interference contains $K$ output interference and $K-1$ input interference. 
In contrast to output interference, input interference is the interference to the next hops.
Under the same assumption for neighbor interference and with the same approach for analysis, the network service is lower bounded by
\begin{eqnarray}
S(t)-(2K-1)A(t),
\end{eqnarray}
where $K=\min(K,N)$.
It is worth noting that the interference of the input is absolute while the interference of the output is relative in that it exists only when the output is fed back into the wireless channel.
This result implies that the network service is limited by the hop with the most severe interference, and the multi-hop network can be reduced to a single hop system for performance analysis.
%This argument holds universally for both cyclic and acyclic systems.

The above insights are summarized in the following corollary.

\begin{corollary}
Consider a concatenation of wireless channels with intra- and inter- interferences. If all hops see a common wireless channel, i.e., an identical service process is shared among each hop, then the number of hops has no impact on performance, i.e., it seems that the arrival traverses through the channel for only once with a service process that is limited by the hop experiencing the deepest interference.
\end{corollary}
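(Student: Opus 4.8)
The plan is to assemble the corollary directly from the two building blocks developed in this section: the single-hop feedback bound $\widetilde{S}(t)\ge S(t)-A(t)$ of the preceding theorem, together with the min-plus algebra of leftover service processes used in the multi-hop discussion. First I would set up the $N$-hop concatenation in which every hop sees the \emph{same} capacity process $S$, so that the end-to-end service offered to the original arrival $A_1=A$ is the bivariate min-plus convolution $(S-A_1^\ast)\otimes\cdots\otimes(S-A_N^\ast)$, where $A_i^\ast$ denotes the interfering traffic seen at hop $i$. By system causality each $A_i^\ast$ is dominated by $A$ (as already used in the single-hop proof), so monotonicity of the bivariate min-plus convolution lets me replace every factor $S-A_i^\ast$ by $S-A$, giving the lower bound $(S-A)\otimes\cdots\otimes(S-A)(t)$.

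Second, I would invoke subadditivity of the process $t\mapsto S(t)-A(t)$, which holds for the cases considered here, e.g. a stationary additive process, to collapse the $N$-fold convolution: since $\inf_{\mathbf u\in\mathcal U(x)}\sum_{i=1}^N (S-A)(\mu_{i-1},\mu_i)\ge (S-A)(t)$, the end-to-end service is bounded below by a single copy $S(t)-A(t)$. This is precisely the statement that, under neighbor interference, the hop count drops out of the performance bound, so that the network behaves as a single channel traversed only once.

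Third, to obtain the ``deepest interference'' phrasing for generic $K$-hop interference, I would repeat the accounting with the worst-case interference pattern identified above, namely $K$ output interferers and $K-1$ input interferers concentrated at the bottleneck hop, with $K=\min(K,N)$, so that the single surviving service term becomes $S(t)-(2K-1)A(t)$ rather than $S(t)-A(t)$. Feeding this reduced service process into the standard SNC delay and backlog bounds then yields the stated performance, the network being equivalent to one channel with service $S(t)-(2K-1)A(t)$.

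The main obstacle is the second step: justifying that the $N$-fold min-plus convolution genuinely collapses to a single term. This rests entirely on the subadditivity of $S(t)-A(t)$, which need not hold for arbitrary dependence structures in the capacity process, so one must be explicit that the claim is made under that hypothesis and identify the concrete cases (the additive and constant-rate settings treated here) in which it is satisfied. One must also be careful that the asymmetry between output interference, which is present only when the output is actually fed back and is therefore \emph{relative}, and input interference, which is always present and therefore \emph{absolute}, is correctly reflected in the coefficient $2K-1$ and not double-counted. The remaining ingredients, namely causality, monotonicity of the bivariate min-plus convolution, and the closing SNC bound, are routine appeals to results already cited above.
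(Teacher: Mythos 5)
Your proposal follows the paper's own argument essentially verbatim: dominate each interfering flow $A_i^\ast$ by $A$ via causality, apply monotonicity of the bivariate min-plus convolution, collapse the $N$-fold convolution using subadditivity of $S(t)-A(t)$ (valid for the stationary additive case), and extend to generic $K$-hop interference with the $S(t)-(2K-1)A(t)$ bound. The approach and all key lemmas coincide with the paper's derivation, and your explicit caveats about the subadditivity hypothesis and the relative/absolute distinction between output and input interference are consistent with the paper's remarks.
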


Another generalization is that different hops see different wireless channels, and the result is summarized in the following theorem. 

\begin{theorem}
Consider a concatenation of wireless channels with cumulative capacity process $S_i(t)$ that is an additive process. Then, for constant arrival $A(t) = \lambda{t}$, the end to end delay is expressed as
\begin{equation}
P(D\ge{d}) 
\le \sum_{t=0}^{\infty}\sum_{\mathbf{u}\in\mathcal{U}(x)} E\left[ e^{-\theta\sum_{i=1}^N {S_i^\ast}(\mu_{i-1},\mu_i)} \right] \cdot e^{\theta{\lambda{(t-d)}}}, 
\end{equation}
where ${S_i^\ast}(\mu_{i-1},\mu_i)=(S_i-K^{\ast}A_1)(\mu_{i-1},\mu_i)$ and $\mathcal{U}(x)=\{ \mathbf{u}=(u_1,\ldots,u_t): {0\le{\mu_1}\le{\ldots}\le{\mu_{N-1}}\le{t}} \}$.
\end{theorem}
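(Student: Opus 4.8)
The plan is to combine the effective-service characterization of the interference-afflicted hops developed in this section with the standard stochastic network calculus delay bound for a constant fluid arrival, and then to dualize the min-plus convolution into a finite sum via a union bound before closing with a Chernoff argument.

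First I would record the hop-level reduction. By the single-hop leftover-service argument under blind scheduling, together with system causality $A^{\ast}(t)\le A(t)$ propagated hop by hop as in the $K$-hop discussion above, each hop $i$ offers to the flow $A_1(t)=A(t)$ an effective service process bounded below by $S_i^\ast(\mu_{i-1},\mu_i)=(S_i-K^{\ast}A_1)(\mu_{i-1},\mu_i)$, where $K^{\ast}$ collects the at most $K$ output and $K-1$ input interference contributions with $K=\min(K,N)$. Since the cumulative capacity of a hop is exactly its service process, the service seen end to end by $A$ across the concatenation is the bivariate min-plus convolution $S^\ast(s,t)=S_1^\ast\otimes\cdots\otimes S_N^\ast(s,t)=\inf_{\mathbf{u}\in\mathcal{U}(x)}\sum_{i=1}^{N}S_i^\ast(\mu_{i-1},\mu_i)$ with $\mu_0=s$ and $\mu_N=t$.

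Next I would invoke the steady-state delay bound for $A(t)=\lambda{t}$, namely $P(D\ge d)\le\sum_{t=0}^{\infty}P\{\lambda t-S^\ast(t)\ge\lambda d\}$, substituting the convolution form of $S^\ast(t)=S^\ast(0,t)$. Because $-\inf=\sup(-\cdot)$, the event $\{\lambda t-\inf_{\mathbf{u}}\sum_i S_i^\ast(\mu_{i-1},\mu_i)\ge\lambda d\}$ coincides with $\{\sup_{\mathbf{u}}(\lambda t-\sum_i S_i^\ast(\mu_{i-1},\mu_i))\ge\lambda d\}$; a union bound over $\mathbf{u}\in\mathcal{U}(x)$ followed by Markov's inequality applied to $e^{\theta(\lambda t-\sum_i S_i^\ast(\mu_{i-1},\mu_i))}$, valid for any $\theta>0$ inside the convergence region of the cumulant generating functions, yields
\begin{equation}
P(D\ge d)\le\sum_{t=0}^{\infty}\sum_{\mathbf{u}\in\mathcal{U}(x)}E\left[e^{-\theta\sum_{i=1}^{N}S_i^\ast(\mu_{i-1},\mu_i)}\right]e^{\theta\lambda(t-d)},
\end{equation}
which is the claimed inequality.

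The routine parts are the union bound and the Chernoff step; the part that needs care — and where the additive-process hypothesis is actually used — is the justification that $S_i-K^{\ast}A_1$ is a legitimate lower service curve for hop $i$. This requires each $S_i-A$ to be (at least) subadditive so that chaining causality across hops does not accumulate more than the stated $K^{\ast}$ copies of $A_1$, and a stationary additive $S_i$ supplies exactly this property. Allowing the $S_i$ to differ across hops is precisely why the convolution cannot be collapsed to a single hop as in the common-channel corollary, so the full sum over $\mathcal{U}(x)$ must be retained. A secondary point to verify is that the steady-state union bound $P(D\ge d)\le\sum_{t}P\{\cdot\}$ and the summability of the right-hand side in $t$ both hold, which is ensured by the stability condition $\lambda<E[C]$ making $\kappa(\theta)<0$ for a suitable $\theta>0$; if in addition the hops are mutually independent the expectation factorizes over $i$ and each factor reduces to a power of a per-slot moment generating function, but this simplification is not needed for the stated bound.
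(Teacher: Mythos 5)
Your proposal follows essentially the same route as the paper's proof: reduce each hop's interference-afflicted service to $S_i-K^{\ast}A_1$ via causality and the $K$-hop interference count, write the end-to-end service as the bivariate min-plus convolution, and apply a union bound over the partitions $\mathbf{u}\in\mathcal{U}(x)$ together with a Chernoff/Markov step to the steady-state delay sum. The only difference is presentational: you are more explicit about where the additive (subadditive) hypothesis and the convergence condition enter, which the paper leaves as a closing remark.
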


\begin{proof}
Recall that the distribution function of the cumulative capacity of a concatenation of wireless channels is bounded by
\begin{IEEEeqnarray}{rCl}
\IEEEeqnarraymulticol{3}{l}{
{F}_{S_t}(x) = P\left\{ S_1\otimes\ldots\otimes S_N(t) \le x \right\} 
} \\\quad
\le \sum_{0\le{\mu_1}\le{\ldots}\le{\mu_{N-1}}\le{t}} E\left[ e^{-\theta\sum_{i=1}^N {S_i}(\mu_{i-1},\mu_i)} \right] \cdot e^{\theta{x}}, \IEEEeqnarraynumspace
\end{IEEEeqnarray}
for some $\theta>0$.

Specifically, the network capacity with interference is bounded by
\begin{IEEEeqnarray}{rCl}
\IEEEeqnarraymulticol{3}{l}{
S(t) =
(S_1 - A^{\ast}_1)\otimes\ldots\otimes(S_N - A^{\ast}_N)(t)
} \\\quad
&\ge& (S_1 - K^{\ast}A_1)\otimes\ldots\otimes(S_N - K^{\ast}A_1)(t),
\end{IEEEeqnarray} 
where $K^\ast=2K-1$ and $K=\min(K,N)$.
Thus the end to end delay is bounded by 
\begin{IEEEeqnarray}{rCl}
\IEEEeqnarraymulticol{3}{l}{
P(D\ge{d}) \le \sum_{t=0}^{\infty} P\left\{ { S(t) }\le\lambda{(t-d)} \right\}
} \\\quad
\le \sum_{t=0}^{\infty}\sum_{\mathbf{u}\in\mathcal{U}(x)} E\left[ e^{-\theta\sum_{i=1}^N {S_i^\ast}(\mu_{i-1},\mu_i)} \right] \cdot e^{\theta{\lambda{(t-d)}}}, \IEEEeqnarraynumspace
\end{IEEEeqnarray}
where ${S_i^\ast}(\mu_{i-1},\mu_i)=(S_i-K^{\ast}A_1)(\mu_{i-1},\mu_i)$ and $\mathcal{U}(x)=\{ \mathbf{u}=(u_1,\ldots,u_t): {0\le{\mu_1}\le{\ldots}\le{\mu_{N-1}}\le{t}} \}$, so long as the summation converges for some $\theta>0$.
\end{proof}

\section{Discussion and Related Work}\label{related}

Network calculus is a promising theory for service guarantee analysis of queueing systems in computer networks, which compliments the classic queueing theory. Its stochastic branch, i.e., stochastic network calculus (SNC), is more relevant to wireless networks owing to the stochastic nature of wireless channel which can hardly be dealt by the deterministic branch of network calculus (DNC). In the SNC literature, a lot of results have been reported (see \cite{fidler2015guide} for a comprehensive survey). Representative works include \cite{jiang2006basic} and \cite{ciucu2012perspectives}. The 2006 paper \cite{jiang2006basic} provides a holistic overview on the basic properties that are needed for the theory. In the 2012 paper \cite{ciucu2012perspectives},  new insights in SNC are provided. In brief, these two papers are more about the fundamental SNC theory itself, which is just a framework of theory. For the application of the SNC theory, more details and characteristics of the considered scenario should be taken into careful account. 
%Network calculus has been considered as a promising theory for service guarantees in queueing systems, which is an alternative to queueing theory. In 2006, ACM SIGCOMM published a paper on network calculus, i.e., \cite{jiang2006basic}, where a holistic overview on the architecture of the theory is given, in 2012, it published another paper, i.e., \cite{ciucu2012perspectives}, where new insights in network calculus are provided. In brief, these two papers are more about the theory of network calculus, while network calculus itself is just a framework of theory, details should be taken into account in application.

In this paper, we focus on wireless networks, taking over the future work of \cite{jiang2006basic}. The intention is to derive results that are helpful for extending SNC to construct a calculus for wireless networks. We focus on exploring special properties of wireless channel capacity and link them to the foundation of SNC through the cumulative capacity process that is a direct match with the most fundamental service process model in SNC. In addition to the large set of results reported in the paper, there are two findings worth highlighting, which bridge two critical gaps when applying SNC to wireless networks. 

One gap is that, in most existing SNC results, their analysis relies on the assumption that the distribution of the service process is exponential (e.g., \cite{lee1995performance}) or that the moment generating function of the service process exists with finite moments (e.g., \cite{fidler2006end}). However, for wireless networks, no justification of this assumption has been given, weakening the application of SNC to wireless networks. An appealing finding of this paper is that for typical wireless channel models, including Rayleigh, Rice, Nakagami-$m$, Weibull, and lognormal fading channel, their instantaneous capacity and cumulative capacity are both light-tailed, which implies that for such a wireless channel, its service process has moment generating function existent with finite moments. This bridges the first gap. In addition, a direct implication of this finding is that the QoS performance in terms of delay and backlog of the channel is up-bounded by exponential distribution, if the traffic arrival is also light-tailed. 

Another gap is caused by the fact that wireless channel is broadcast channel. As such, when the receiver on a channel forwards its received data traffic in the network case, this forwarded traffic flow competes with the initial flow from the sender, causing self- or intra-flow interference on the channel. This problem belongs to the feedback problem, which is an open problem in SNC \cite{jiang2008stochastic}. In this paper, original results are derived to bridge this gap. These results, not limited to the wireless channel channel case, complement the five basic properties of SNC \cite{jiang2006basic}, thus providing a support to extend the application domain of SNC from feedforward networks to non-feedforward networks.

%In this paper, we focus on wireless networks, taking over the future work of \cite{jiang2006basic} and is dedicated to construct a calculus for wireless networks. On the one hand, an illumination is given on applying stochastic network calculus, i.e., special properties of wireless channel capacity has been explored, which turns out to be the mathematical foundation for the analysis and results in stochastic network calculus, i.e., the exponential bounds of performance measures are conditional on that the distributions of the cumulative service or arrival processes are light tailed, which is a concretion of exponential assumption in \cite{lee1995performance};
%on the other hand, feedback analysis has been provided, which is an open problem in stochastic network calculus \cite{jiang2008stochastic}, it's a complementary to the five basic properties summarized in \cite{jiang2006basic}, thus extends the application domain of stochastic network calculus from feedforward networks to non-feedforward networks.
%It's worth noting that instead of finding a deterministic service curve for the service process conventionally in \cite{jiang2006basic,lee1995performance}, this paper works directly on the stochastic service process, which bring in a lot of benefits in order to take into account the statistical properties of the random process and is a hot research point in stochastic network calculus.

In comparison with literature SNC based performance analysis of wireless networks, this paper works directly on the stochastic channel capacity process. In addition, the literature investigation mainly focuses on the stochastic service curve characterization of the cumulative service process and on applying it to QoS performance analysis. Little has directly focused on the probability distribution function characteristics of the cumulative capacity. One benefit of the analytical approach in this paper is that it can take the channel capacity characteristics more directly into account, revealing findings and obtaining results that could otherwise not be achieved.   

In the literature, while many SNC-based studies have been conducted for performance analysis of wireless networks, they typically map the wireless channel physical layer models to some link layer modes, e.g., Gilbert-Elliott model \cite{jiang2005analysis,fidler2006wlc15}, finite state Markov channel \cite{zheng2013stochastic,lei2016stochastic}, and other upper layer abstractions \cite{ciucu2013towards}. However, such a mapping form physical layer models to upper layer modelers prevents the analysis from directly exploring special characteristics of the physical wireless channel. In addition, it also hides the ultimate capacity limit, which the wireless channel can achieve, from the analysis, which consequently leads to a conservative estimation of the delay-constrained capacity based on the results obtained. 

A  closely related work with using the wireless channel's physical layer fading behavior as its starting point of analysis is \cite{al2016network}. However, the focus of  \cite{al2016network} is on applying Mellin transforms to the fading process, based on which backlog and delay results are derived in the transform domain. A related work on feedback analysis is \cite{shekaramiz2016network}, where a window flow control system is analyzed for additive service process, but its studied feedback type is different from the self-interference resultant feedback type investigated in this paper.

%A closely related work is \cite{al2016network}, where stochastic network calculus has also been applied to wireless fading channels, instead of investigating the tail behavior of the wireless channel capacity, it concentrated on Mellin transforms of the fading process and derived the backlog and delay results in the transform domain, in addition, the dependence in capacity and interferences in channels are not considered therein. A related work on feedback analysis is \cite{shekaramiz2016network}, where a window flow control system is analyzed for additive service process without considering dependence and non-stationary case.
%Since the capacity is a logarithm function of the fading process, the Mellin transform for the fading process can be treated equivalently as the moment generating function of the capacity \cite{sun2016study}. 
%Taking advantage of stochastic network calculus, the wireless channels have mainly been analyzed through a mapping from the physical layer models to upper layer models, e.g., Gilbert-Elliott model \cite{jiang2005analysis,fidler2006wlc15}, finite state Markov channel \cite{zheng2013stochastic,lei2016stochastic}, and other upper layer abstractions \cite{ciucu2013towards}, however, this abstraction masks the ultimate capacity limit that the wireless channel can achieve.

The idea of taking advantage of specific dependence structures in analysis can also be found in the stochastic network calculus literature, e.g., independent increments \cite{fidler2006end,jiang2008stochastic,jiang2010note} and Markov property \cite{zheng2013stochastic, ciucu2014sharp,poloczek2015service, lei2016stochastic}. However, such diverse dependence structures are investigated separately, without comparing their impacts on the channel capacities. %a unified technique. 
In \cite{dong2015copula}, the concept of copula is brought into stochastic network calculus, but the focus is on the arrival process.  %which is applied to consider the dependence in the superposition property of arrivals.
Different from \cite{dong2015copula}, our focus is the cumulative capacity process that essentially is the cumulative service process, while not the arrival process. In addition, besides additive processes or L\'{e}vy  processes and Markov processes, whose copulas are focused and their properties are made use of in the related literature works,  we also consider comonotonicity as a dependence structure when there is a strong time dependence in the channel, i.e., when the time series of instantaneous wireless channel capacities can be represented as increasing functions of a common random variable. 
%In addition, we use copula to feature the dependence structures in the considered cumulative capacity processes. 

%In addition, the literature investigation mainly focuses on the stochastic service curve characterization of the cumulative service process and on applying it to QoS performance analysis. Little has directly focused on the probability distribution function characteristics of the cumulative capacity. 
%To unify the investigation for each concept, we introduce copula as a technique to account for the various dependence structures implied by different possible properties of the process. 
%For instance, besides a process with i.i.d. increments, a Markov process has also been proved to have a copula property \cite{darsow1992copulas,overbeck2015multivariate}. In addition to such dependence assumptions for which many results in the stochastic network calculus are available, we also use comonotonicity as a dependence structure when there is a strong time dependence in the channel, i.e., when the time series of instantaneous wireless channel capacities can be represented as increasing functions of a common random variable.

\section{Conclusion}\label{conclusion}

Future wireless communication calls for exploration of more efficient use of wireless channel capacity to meet the increasing demand on higher data rate and less latency. This motivates to maximally take into consideration the special characteristics of the wireless channel capacity process in analysis, which include the tail behavior, distribution bounds, stochastic ordering, and self-interference in communication of the capacity. To this aim, a set of new results directly exploring these characteristics have been presented in this paper. Among them, an appealing finding is that, for typical fading channels, their instantaneous capacity and cumulative capacity are both light-tailed. It immediately implicates that the cumulative capacity and subsequently the delay and backlog performance can be upper-bounded by some exponential distributions and provides evident justification for the exponential distribution assumption used in the literature. In addition, various bounds have been derived for distributions of the cumulative capacity and the delay-constrained capacity, considering three representative dependence structures in the capacity process, namely comonotonicity, independence, and Markovian. To help gain insights in the performance of a wireless channel, stochastic orders are introduced to the cumulative capacity process, based on which, results to compare the delay and delay-constrained capacity performance have been obtained. Moreover, the open SNC problem, i.e., the impact on performance caused by self-interference in communication in wireless channel is tackled through a novel approach that models it as a feedback system, based on which original results have been derived. In all, the set of results obtained in this paper provide fundamental contributions to linking the SNC theory to wireless networks and hence contribute significantly to its extension towards a calculus for wireless networks. 

%\cleardoublepage

\cleardoublepage

\bibliographystyle{abbrv}%{IEEEtran}%{alpha}%{unsrt}
\bibliography{main}

\end{document}